\documentclass[lettersize,journal]{IEEEtran}
\usepackage{array}
\usepackage[caption=false,font=normalsize,labelfont=sf,textfont=sf]{subfig}
\usepackage{stfloats}
\usepackage{verbatim}
\usepackage{enumitem}
\usepackage{cite}
\usepackage{bm}
\usepackage{amsmath,amssymb,amsfonts}
\usepackage{algorithm}
\usepackage{algorithmic}
\usepackage{graphicx}
\usepackage{textcomp}
\usepackage{xcolor}
\usepackage{gensymb}
\usepackage{comment}
\usepackage{soul}
\usepackage{color, xcolor}
\usepackage{amsthm}
\usepackage{bm}
\usepackage{multicol}
\usepackage{enumitem}
\usepackage{booktabs}
\usepackage{array}
\usepackage{amsthm}

\theoremstyle{plain}

\newtheorem{lemma}{Lemma}
\newtheorem{theorem}{Theorem}

\newtheorem{remark}{Remark}
\newcolumntype{C}[1]{>{\centering\arraybackslash}p{#1}}
\newcolumntype{M}[1]{>{\centering\arraybackslash}m{#1}}

\def\BibTeX{{\rm B\kern-.05em{\sc i\kern-.025em b}\kern-.08em
		T\kern-.1667em\lower.7ex\hbox{E}\kern-.125emX}}
\usepackage{balance}

\begin{document}
	\title{Computational and Effective Degrees of Freedom for Spatially Stationary HMIMO Channel Modeling}
	
	\author{
		Hangsong Yan, \textit{Member}, \textit{IEEE}, Hong Yang, \textit{Senior Member}, \textit{IEEE}, Shu Sun, \textit{Senior Member}, \textit{IEEE}
		\thanks{
			Hangsong Yan is with the Hangzhou Institute of Technology, Xidian University, Hangzhou, China (email: yanhangsong@xidian.edu.cn). 
			
			Hong Yang (retired) was with the Department of Mathematics and Algorithms Research, Nokia Bell Labs, Murray Hill, USA (email: hyang.bell.labs@gmail.com). 
			
			Shu Sun is with the School of Information Science and Electronic Engineering, Shanghai Jiao Tong University, Shanghai, China (email: shusun@sjtu.edu.cn). 
	}
}
	
	\maketitle
	
\begin{abstract}
	This paper establishes a comprehensive theoretical framework for the continuous-to-discrete modeling of spatially stationary holographic MIMO (HMIMO) channels utilizing the Nystr\"{o}m method with Gauss-Legendre quadrature (NGLQ). Starting with an operator-theoretic analysis of the NGLQ method, we prove that its quadrature error exhibits a super-exponential decay. Furthermore, we derive a spatial sampling threshold, termed computational degrees of freedom (cDoF), which reveals a $\pi/2$ oversampling penalty over the physical DoF for 1D arrays, compounding to a $68\%$ computational redundancy for 2D separable grids. 
	To address the ill-conditioning of the eigenvalue decomposition (EVD) problem inherent to the Nystr\"{o}m discretization, we invoke the multidimensional Szeg\H{o}-Widom asymptotic expansion. 
	This analysis yields a physically grounded semi-analytical expression for the effective DoF (eDoF) of 2D rectangular apertures, capturing the anisotropic boundary truncation effects to guide partial EVD and reduce computational complexity. Numerical evaluations confirm the tightness of the cDoF threshold under worst-case end-fire conditions. Moreover, simulations utilizing closed-form kernels for isotropic scattering verify that the derived eDoF acts as an accurate asymptotic approximation. Finally, by deploying the exact non-uniform discrete Fourier transform to eliminate interpolation error floors, we demonstrate spectral convergence down to the machine-precision level for non-isotropic scattering environments.
\end{abstract}

\begin{IEEEkeywords}
	Holographic MIMO, channel modeling, Nystr\"{o}m method, Gauss-Legendre quadrature, computational DoF, Szeg\H{o}-Widom asymptotic expansion, effective DoF.
\end{IEEEkeywords}

\section{Introduction}
The rapid evolution of wireless communications has spurred substantial interest in continuous aperture antennas, leading to the conceptualization of holographic multiple-input and multiple-output (HMIMO) systems. By integrating a virtually infinite number of antenna elements with infinitesimal spacing into a compact spatial aperture, HMIMO offers exceptional spatial resolution and substantial beamforming gains, pushing such systems to asymptotically approach Shannon's capacity limit~\cite{Wei2026Electromagnetic, Zhang2026Wavenumber, Bjornson2024Towards}. Unlike conventional discrete arrays, the behavior of HMIMO channels is characterized by a physics-based, continuous model derived from electromagnetic (EM) wave propagation theory~\cite{Pizzo2020Spatially}. 
To unlock the potential advantages of HMIMO, developing a physically compliant and mathematically tractable continuous-to-discrete channel model is of paramount importance. 

In our prior work~\cite{Yan2026Spectrally}, we introduced a spectral-order Nystr\"{o}m method via Gauss-Legendre quadrature (NGLQ) for the discretization of continuous HMIMO channels. We demonstrated that the NGLQ method preserves the continuous spatial correlation properties and significantly outperforms the traditional Fourier series expansion (FSE) approach~\cite{Pizzo2022Fourier}, establishing a spectrally convergent discretization method. However, the method proposed in~\cite{Yan2026Spectrally} leaves two fundamental theoretical questions open. First, it lacks a rigorous convergence analysis; the selection of quadrature nodes in~\cite{Yan2026Spectrally} relied on the classic physical degrees of freedom (DoF) with a heuristic refinement parameter, lacking theoretical guidance. Second, to guarantee a numerically stable discrete channel model, a fundamental theoretical configuration of the optimal number of truncated eigenmodes is also required, which was previously unaddressed. This requirement stems from a mathematical challenge inherent to the Nystr\"{o}m discretization: the ill-conditioning of the Karhunen-Lo\`{e}ve decomposition (KLD) eigenvalue problem when resolving the complete numerical eigenspace. Specifically, since the eigenvalues of $C^{\infty}$ smooth spatial correlation functions decay rapidly, the associated continuous kernel is effectively restricted to a finite numerical rank under any given precision threshold. 
We term this finite numerical rank the effective DoF (eDoF), since it quantifies the number of eigenvalues or spatial dimensions resolvable above the specified precision.
To explicitly quantify this eDoF, the critical challenge is to rigorously characterize the asymptotic distribution of these eigenvalues.
Once this distribution is established, it not only uncovers the intrinsic dimensionality of the continuous channel but also provides a theoretical threshold for eigenspace truncation. By safely discarding the vanishing eigencomponents, this truncation guarantees numerical stability and can also reduce computational complexity, which is particularly beneficial for large normalized apertures.

Analytically quantifying this eDoF is fundamentally an eigenvalue counting problem. From a mathematical perspective, counting the eigenvalues of a continuous kernel above a specified precision threshold is equivalent to applying a discontinuous test function (i.e., a step function) and evaluating the trace of the associated integral operator. 
For 1D continuous apertures, the mathematical foundation for such asymptotic trace expansions was established by Landau and Widom in \cite{landau1980eigenvalue}, whose proof accommodated the discontinuous test functions required for eigenvalue counting. However, extending this analytical framework to multidimensional spatial apertures with piecewise smooth boundaries (e.g., 2D rectangular arrays) poses a mathematical challenge. 
The foundation for higher dimensions relies on the multidimensional Szeg\H{o}-Widom asymptotic expansion, which was first conjectured by Widom in~\cite{Widom1982}. This expansion has had a profound impact on physics, notably in the analytical formulation of the entanglement entropy of fermions~\cite{PhysRevLett.96.100503}. In pure mathematics, a proof of the underlying conjecture was first achieved by Sobolev in~\cite{sobolev2013pseudo}, but it was confined to smooth domain boundaries and smooth test functions. Sobolev expanded this framework in subsequent works to accommodate realistic physical geometries. In 2015, he proved the conjecture for piecewise smooth spatial domains, demonstrating that boundary singularities such as corners do not alter the leading-order Widom asymptotics~\cite{Sobolev2015Wiener}. Furthermore, in his 2017 work \cite{Sobolev2017Functions}, the test function constraints were relaxed to encompass continuous but non-smooth functions. 
When we shift our focus to the eDoF quantification of spatially stationary HMIMO channels, their wavenumber domain constraint intrinsically defines a discontinuous symbol in operator theory. Moreover, quantifying this eDoF also requires accommodating multidimensional piecewise smooth domains and discontinuous test functions. To the best of the authors' knowledge, the above scenario remains an open mathematical problem.  

Building upon the NGLQ method and Szeg\H{o}-Widom asymptotic expansion, this paper establishes a comprehensive theoretical framework for continuous-to-discrete HMIMO channel modeling. By resolving the ill-conditioning inherent to the spectral-order Nystr\"{o}m discretization, we provide a unified methodology applicable to arbitrary aperture scales. The major contributions of this paper are summarized as follows:
\begin{itemize}[leftmargin=*, topsep=0pt, itemsep=0pt, parsep=0pt]

\item We provide an operator-theoretic analysis of the NGLQ method in the 1D case and prove that the quadrature error exhibits a super-exponential decay. Furthermore, by shifting the analytical perspective to the local interpolation remainder and applying a ratio analysis to the error bounds, we determine the theoretical boundary required for the onset of numerical convergence, which we define as the computational DoF (cDoF). The tightness of the derived cDoF is verified under the worst-case end-fire conditions.

\item We extend the operator-theoretic analysis to the 2D rectangular case via a tensor-product approach. Based on this approach, we reveal that a $\pi/2$ oversampling penalty over the physical DoF for 1D arrays is compounded to a 68\% computational redundancy for 2D separable grids. Within this 2D framework, we also propose an environment-aware NGLQ scheme to effectively reduce the required number of GLQ nodes by adapting to specific channel conditions.

\item To address the ill-conditioning of the Nystr\"{o}m discretization, we invoke the multidimensional Szeg\H{o}-Widom asymptotic expansion for discontinuous test functions. We propose a physically grounded, semi-analytical expression for the 2D rectangular eDoF. This asymptotic formulation captures anisotropic boundary truncation effects, naturally providing a theoretical prior to guide partial EVD solvers and reduce computational complexity for large normalized apertures. 

\item Utilizing closed-form kernels for isotropic scattering, we numerically evaluate the exact eigenvalue distribution under discontinuous test functions. These evaluations verify that the proposed semi-analytical eDoF serves as an accurate asymptotic approximation, establishing an aperture-dependent, numerically stable regime for HMIMO modeling bounded by the cDoF and eDoF.

\item We deploy the exact non-uniform discrete Fourier transform (NUDFT) to evaluate the spatial correlation function under the non-isotropic von Mises-Fisher (vMF) scattering model, entirely eliminating the interpolation error floors caused by the traditional approach of combining the inverse discrete Fourier transform (IDFT) with interpolation. Numerical evaluations demonstrate spectral convergence down to machine precision, confirming the effectiveness of our proposed framework in non-isotropic environments.
\end{itemize}

The remainder of this paper is organized as follows. Sections II and III establish the continuous small-scale fading model and its NGLQ-based discretization framework. Sections IV and V derive the cDoF thresholds and prove the super-exponential quadrature convergence for 1D linear and 2D rectangular arrays, respectively. Section VI invokes the Szeg\H{o}-Widom asymptotic expansion to characterize the eDoF and formulates an aperture-dependent, numerically stable regime. Section VII validates the proposed framework under non-isotropic scattering using the exact NUDFT. Finally, Section VIII concludes the paper.

\section{Continuous Small-Scale Fading Model}
\label{sec:Continuous small-scale fading model}
This section establishes the physical model for the monochromatic, far-field, continuous small-scale fading, denoted by $h(\mathbf{r})$, in a source-free environment, where $\mathbf{r} = (x, y, z)$ represents the generic 3D spatial coordinate.
Building upon the plane-wave representation introduced in~\cite{Pizzo2020Spatially}, the small-scale fading is characterized as a scalar random field fundamentally governed by the physical laws of wave propagation.
Specifically, this field satisfies the homogeneous scalar Helmholtz equation, which constrains the Cartesian components of the wavenumber vector $(k_x, k_y, k_z)$ onto a sphere of radius $\kappa = 2\pi/\lambda$ (with $\lambda$ being the wavelength)~\cite{hildebrand1962advanced}:
\begin{equation}
	k_x^2 + k_y^2 + k_z^2 = \kappa^2.
	\label{eq:k_x_k_y_k_z}
\end{equation}

Under the far-field propagation assumption, we can neglect the effects of evanescent waves (i.e., $k_z^2 < 0$), since they decay exponentially with distance and are negligible far from the source. This assumption yields two real solutions for the vertical wavenumber component, $k_z = \pm\sqrt{\kappa^2 - k_x^2 - k_y^2}$, physically corresponding to the upgoing $(+)$ and downgoing $(-)$ propagating waves impinging from the upper and lower hemispheres in a 3D scattering environment. Consequently, the total field at any spatial position $\mathbf{r}$ is the superposition of these two components, i.e., $h(\mathbf{r}) = h_{+}(\mathbf{r}) + h_{-}(\mathbf{r})$. These components are expressed via the inverse Fourier transform over the constraint region (i.e., the disk $k_x^2 + k_y^2 \leq \kappa^2$):
\begin{align}
	&h_{\pm}(\mathbf{r}) = \frac{1}{4\pi\sqrt{\pi}}\iint_{k_x^2+k_y^2 \leq \kappa^2} \frac{A_{h,\pm}(k_x,k_y)}{(\kappa^2 - k_x^2 - k_y^2)^{1/4}} \nonumber \\
	&\quad\quad \times W^{\pm}(k_x, k_y)e^{j(k_x x + k_y y \pm \sqrt{\kappa^2 - k_x^2 -k_y^2}z)}dk_xdk_y,
	\label{eq:h_pm_x_y_z}
\end{align}
where the prefactor $\frac{1}{4\pi\sqrt{\pi}}$ ensures the unit-variance normalization of the fading field, and the denominator $(\kappa^2 - k_x^2 - k_y^2)^{1/4}$ originates from the Jacobian determinant for the spherical-to-planar wavenumber projection.
Furthermore, $A_h(\mathbf{k}) = A_h(k_x, k_y, k_z)$ denotes a real-valued, non-negative deterministic field, called the spectral factor, whose square, $A_h^2(k_x, k_y, k_z)$, describes the angular power distribution of the arriving plane waves.
The projected spectral factor $A_{h,\pm}(k_x,k_y)$ in (\ref{eq:h_pm_x_y_z}) is given by:
\begin{equation}
	A_{h, \pm}(k_x, k_y) = A_{h}(k_x, k_y, \pm\sqrt{\kappa^2 - k_x^2 - k_y^2}).
	\label{eq:A_h_pm}
\end{equation}
Lastly, $W^{\pm}$ are two independent, zero-mean, complex-valued Gaussian white-noise random fields with unit variance. 
	
\section{NGLQ Method}
\label{sec:NGLQ_method}
In this section, we present the NGLQ method to establish a tractable, discrete model for HMIMO small-scale fading. This approach relies on discretizing the continuous KLD while preserving the channel's spatial correlation properties.
\subsection{Karhunen-Lo\`eve Decomposition}
For a channel with spatial autocorrelation function (i.e., kernel) $K(\mathbf{r}, \mathbf{r}') = \mathbb{E}[h(\mathbf{r})h^*(\mathbf{r}')]$, the Hermitian and positive-semidefinite nature of this kernel allows us to invoke Mercer's theorem~\cite{riesz1990functional, ghojogh2021reproducing}. This theorem guarantees that over any compact spatial domain $\mathcal{D}$, the continuous kernel admits an exact spectral series expansion:
\begin{equation}
	K(\mathbf{r}, \mathbf{r}') = \sum_{n=1}^{\infty}\mu_n g_n(\mathbf{r})g_n^*(\mathbf{r}'),
	\label{eq:Mercer}
\end{equation}
where $\{\mu_n\}$ are the real, non-negative eigenvalues and $\{g_n(\mathbf{r})\}$ constitute the corresponding complete orthonormal basis of eigenfunctions.
By projecting the kernel onto this basis and leveraging its orthonormality, the eigenfunctions are shown to satisfy the continuous Fredholm integral equation of the second kind, widely recognized in stochastic processes as the KLD problem:
\begin{equation}
	\int_{\mathcal{D}}K(\mathbf{r}, \mathbf{r}')g_n(\mathbf{r}')\,d\mathbf{r}' = \mu_n g_n(\mathbf{r}), \quad n = 1,2,\ldots
	\label{eq:KLD}
\end{equation}

\subsection{A Spectral-Order Nystr\"{o}m Method via Gauss-Legendre Quadrature}
To numerically solve the continuous Fredholm integral equation in (\ref{eq:KLD}), we can apply GLQ over the compact region $\mathcal{D}$. For any sufficiently smooth function $g(\mathbf{r})$, the integral is approximated as:
\begin{equation}
	\int_{\mathcal{D}}g(\mathbf{r})\,d\mathbf{r} \approx \sum_{m=1}^{M}w_m g(\mathbf{r}_m)
	\label{eq:quadrature_formula}
\end{equation}
where $\{\mathbf{r}_m\}_{m=1}^M$ and $\{w_m\}_{m=1}^M$ denote the predefined GLQ nodes and their strictly positive weights, respectively. Substituting (\ref{eq:quadrature_formula}) into (\ref{eq:KLD}), we obtain the discretized eigenvalue problem:
\begin{equation}
	\tilde{\mu}_n\tilde{g}_n(\mathbf{r}_m) = \sum_{m'=1}^{M}w_{m'}K(\mathbf{r}_m, \mathbf{r}'_{m'})\tilde{g}_n(\mathbf{r}'_{m'}).
	\label{eq:Discrete_KLD}
\end{equation}
Here, $\tilde{\mu}_n$ and $\tilde{g}_n$ represent the numerical approximations of the continuous eigenvalues and eigenfunctions, respectively.
Equation (\ref{eq:Discrete_KLD}) is equivalent to the following eigenvalue decomposition (EVD) problem:
\begin{equation}
	\mathbf{KW}\tilde{\mathbf{g}}_n = \tilde{\mu}_n\tilde{\mathbf{g}}_n,
	\label{eq:KLD_Nystrom_EVD}
\end{equation}
where $\mathbf{K}_{m,m'} = K(\mathbf{r}_m,\mathbf{r}'_{m'})$, $\mathbf{W} = \text{diag}\{[w_1, \ldots, w_M]\}$, and $\tilde{\mathbf{g}}_n = [\tilde{g}_n(\mathbf{r}_1), \ldots, \tilde{g}_n(\mathbf{r}_M)]^T$. Since $\mathbf{KW}$ is generally non-Hermitian, a standard EVD may yield numerical instabilities. To resolve this, (\ref{eq:KLD_Nystrom_EVD}) can be transformed via a similarity transformation using the positive definite matrix $\mathbf{W}^{1/2}$:
\begin{equation}
	(\mathbf{W}^{1/2}\mathbf{K}\mathbf{W}^{1/2})\mathbf{W}^{1/2}\tilde{\mathbf{g}}_n = \tilde{\mu}_n\mathbf{W}^{1/2}\tilde{\mathbf{g}}_n,
	\label{eq:Nystrom_EVD_Symmetric}
\end{equation}
which reformulates the problem into the EVD of a Hermitian matrix, $\mathbf{W}^{1/2}\mathbf{K}\mathbf{W}^{1/2}$. Once the discrete eigenvectors are obtained, the $n$-th continuous eigenfunction can be recovered by substituting $\tilde{\mathbf{g}}_n$ from (\ref{eq:Nystrom_EVD_Symmetric}) into (\ref{eq:Discrete_KLD}):
\begin{equation}
	\hat{g}_n(\mathbf{r}) = \frac{1}{\tilde{\mu}_n}\sum_{m'=1}^M w_{m'}K(\mathbf{r}, \mathbf{r}'_{m'})\tilde{g}_n(\mathbf{r}'_{m'}), \; \; n = 1,\ldots, N_{\text{trunc}}.
	\label{eq:eigenfunction_interpolation}
\end{equation}
Note that the index $n$ is truncated at $N_{\text{trunc}}$ ($N_{\text{trunc}} \leq M$). This truncation is practically necessary because the eigenvalues $\tilde{\mu}_n$ of a smooth kernel decay rapidly toward zero, and a near-zero $\tilde{\mu}_n$ in the denominator of (\ref{eq:eigenfunction_interpolation}) would amplify numerical noise.

Based on the interpolated eigenfunctions and the calculated eigenvalues, the NGLQ-based HMIMO small-scale fading $\hat{h}(\mathbf{r})$ is synthesized as a truncated spatial series:
\begin{equation}
	\hat{h}(\mathbf{r}) = \sum_{n=1}^{N_{\text{trunc}}} a_n \sqrt{\tilde{\mu}_n}\hat{g}_n(\mathbf{r}),
	\label{eq:channel_model}
\end{equation}
where $a_n \sim \mathcal{CN}(0, 1)$ are independent and identically distributed (i.i.d.) complex Gaussian random variables.

\section{Computational DoF and Convergence Analysis}
\label{sec:convergence_analysis}
We first consider the 1D spatial domain to establish a convergence baseline for the proposed method. The continuous integral operator $\mathcal{K}$ acting on a function $g$ over a linear aperture of length $L$ is defined as:
\begin{equation}
	(\mathcal{K}g)(x) = \int_{0}^LK(x,x')g(x')\,dx'.
	\label{eq:1D_integral_operator}
\end{equation}
The corresponding discrete NGLQ operator is then given by:
\begin{equation}
	(\mathcal{K}_M g)(x) = \sum_{m=1}^{M}w_mK(x,x'_m)g(x'_m),
	\label{eq:1D_NGLQ_operator}
\end{equation}
where $\{x'_m\}_{m=1}^M \subset [0, L]$ denote the physical Gauss-Legendre nodes. To evaluate the quadrature truncation error, we define the integrand parameterized by the observation point $x$ as $f_x(x') = K(x,x')g(x')$. Consequently, the $M$-th order GLQ error in the physical domain is defined as:
\begin{equation}
	E_M(f_x) = \int_{0}^{L}f_x(x')\,dx' - \sum_{m=1}^{M}w_mf_x(x'_m).
	\label{eq:GLQ_error_1D}
\end{equation}

To analytically evaluate the quadrature truncation error, we map the physical coordinates $x, x' \in [0, L]$ to the standard GLQ interval $s, t \in [-1, 1]$ via the affine transformations $x = \frac{L}{2}s + \frac{L}{2}$ and $x' = \frac{L}{2}t + \frac{L}{2}$. Correspondingly, we define the mapped continuous kernel and its eigenfunction on the standard interval as $\bar{K}(s, t) = K(x(s), x'(t))$ and $\bar{g}(t) = g(x'(t))$, respectively.
Since the integration measure scales as $dx' = \frac{L}{2}dt$ and the weights map as $w_m = \frac{L}{2}\tilde{w}_m$ (where $\tilde{w}_m$ are the standard GLQ weights), the approximation error of the operator can be factored as $\frac{L}{2}E_{M}(f)$, where $E_{M}(f)$ is the baseline quadrature error on the standard interval:
\begin{equation}
	E_M(f) = \int_{-1}^{1}f(t)\,dt - \sum_{m=1}^{M}\tilde{w}_mf(t_m).
	\label{eq:GLQ_error_1D_std}
\end{equation}
Here, $f(t) = \bar{K}(s, t)\bar{g}(t)$ is defined as the standardized integrand. For notational simplicity, the parametric dependence of $f(t)$ on the observation point $s$ (or $x$) is omitted in the sequel. We now establish the following theorem for $E_M(f)$ in the context of HMIMO channel modeling.
\begin{theorem}
\label{theorem:super_exponential_threshold}
	For a spatially stationary 1D HMIMO aperture of physical length $L$ operating under a maximum spatial wavenumber $\kappa = 2\pi/\lambda$, the standardized quadrature error $E_M(f)$ under an $M$-th order GLQ approximation satisfies the following properties:
	
	1) Onset of Convergence: The computational threshold at which $E_M(f)$ initiates its super-exponential decay is
	\begin{equation}
		M_c \approx \frac{\pi L}{\lambda}.
		\label{eq:M_threshold}
	\end{equation}
	
	2) Global Upper Bound: $E_M(f)$ exhibits super-exponential convergence and is strictly upper bounded by:
	\begin{equation}
		|E_M(f)| \leq \tilde{C}(M)\Big(\frac{e\pi L}{2\lambda M}\Big)^{2M}
		\label{eq:envelope_E_M_f}
	\end{equation}
	where $\tilde{C}(M) = \frac{\sqrt{\pi M}}{2M+1}\exp\left(\frac{1}{3M} - \frac{3}{24M+1}\right)\sup_{t \in \mathbb{R}}|f(t)|$ is an algebraically decaying pre-factor, and $e$ is Euler's number.
\end{theorem}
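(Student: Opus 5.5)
The proof combines the classical derivative form of the Gauss--Legendre remainder with a Bernstein-type bound on the derivatives of the band-limited integrand. On $[-1,1]$, for $f\in C^{2M}$,
\begin{equation*}
E_M(f)=\frac{2^{2M+1}(M!)^4}{(2M+1)\,((2M)!)^3}\,f^{(2M)}(\xi),\qquad \xi\in(-1,1).
\end{equation*}
This is obtained from the Hermite interpolation remainder at the $M$ nodes together with orthogonality of the Legendre polynomials. The whole argument therefore reduces to controlling $\sup|f^{(2M)}|$ and then handling the factorial prefactor sharply.

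\textbf{Step 1 (spectral support of the integrand).} By (\ref{eq:h_pm_x_y_z}), restricting to a line shows that the stationary kernel $K(x,x')=K(x-x')$ has a wavenumber spectrum supported in $|k_x|\le\kappa$. An eigenfunction satisfies $g=\mu^{-1}\mathcal{K}g$ by (\ref{eq:KLD}), so $g$ is a superposition of the same exponentials and is also band-limited to $[-\kappa,\kappa]$. As a function of $x'$, the product $f_x(x')=K(x,x')g(x')$ is therefore entire of exponential type at most $2\kappa$. After the affine map $x'=\frac{L}{2}t+\frac{L}{2}$, $f(t)$ has type at most $2\kappa\cdot\frac{L}{2}=\kappa L=2\pi L/\lambda$. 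Bernstein's inequality for bounded functions of exponential type $\sigma$ then gives $\sup_t|f^{(2M)}(t)|\le(\kappa L)^{2M}\sup_{t\in\mathbb{R}}|f(t)|$. This explains why the supremum in $\tilde C(M)$ is taken over all of $\mathbb{R}$.

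\textbf{Step 2 (global bound).} I will insert the Robbins-sharpened Stirling bounds $\sqrt{2\pi n}(n/e)^n e^{1/(12n+1)}<n!<\sqrt{2\pi n}(n/e)^n e^{1/(12n)}$. The upper bound is applied to $(M!)^4$, contributing $e^{4/(12M)}=e^{1/(3M)}$, and the lower bound to $((2M)!)^3$, contributing $e^{-3/(24M+1)}$. Routine simplification gives
\begin{equation*}
\frac{2^{2M+1}(M!)^4}{(2M+1)((2M)!)^3}\le\frac{\sqrt{\pi M}}{2M+1}\,e^{\frac{1}{3M}-\frac{3}{24M+1}}\Big(\frac{e}{4M}\Big)^{2M}.
\end{equation*}
Multiplying by $(2\pi L/\lambda)^{2M}\sup|f|$ yields exactly $\tilde C(M)\big(\frac{e\pi L}{2\lambda M}\big)^{2M}$, which is (\ref{eq:envelope_E_M_f}). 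Super-exponential decay follows because the base tends to $0$ like $1/M$.

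\textbf{Step 3 (onset threshold, the main obstacle).} Naive ratio tests on the envelope (\ref{eq:envelope_E_M_f}) give thresholds off by constants such as $e/2$ or $1/2$. Following the remark in the introduction, I will instead work with the local interpolation remainder $f^{(2M)}(\xi)\,\omega_M(t)^2/(2M)!$, where $\omega_M$ is the monic node polynomial with $\|\omega_M\|\sim 2^{1-M}$. I will then track the ratio of successive remainder magnitudes, $R_{M+1}/R_M$, asymptotically in $M$. Each increment of $M$ multiplies the derivative bound by the square of the effective type and the factorial-and-node factor by roughly $(1/(2M))^2$. The aim is to show the ratio behaves like $\big(\sigma_{\mathrm{eff}}/(2M)\big)^2$, where the relevant effective type is the support of the scaled integrand, $\sigma_{\mathrm{eff}}=\kappa L$. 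The ratio then drops below one at $M_c\approx\kappa L/2=\pi L/\lambda$, which is (\ref{eq:M_threshold}).

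The delicate point is fixing which effective type enters the ratio. One must check that (i) the $\sqrt{M}$ and $e^{O(1/M)}$ prefactors do not shift the crossing at leading order, and (ii) the crossing is indeed governed by $\kappa L$ and not by $\kappa L/2$ or by the $e/2$-inflated envelope. I will validate this crossing against the worst-case end-fire kernel, where the spectrum saturates $|k_x|=\kappa$.
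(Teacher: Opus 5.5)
Your Steps 1 and 2 match the paper's argument: band-limitation of $\bar K$ and $\bar g$ to $\pi L/\lambda$, frequency doubling to type $2\pi L/\lambda$, iterated Bernstein, then Robbins' bounds on the constant in (\ref{eq:E_N_f_upper_bound}). They correctly give (\ref{eq:envelope_E_M_f}).

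The gap is in Step 3. You say each increment of $M$ multiplies the factorial-and-node factor by roughly $(1/(2M))^2$. That is only the factorial part, $1/[(2M+1)(2M+2)]$. The squared node polynomial contributes a further factor $\approx 1/4$. This already follows from your own estimate $\|\omega_M\|\sim 2^{1-M}$. For Gauss--Legendre nodes one has exactly $\|\omega_M\|_\infty=2^M(M!)^2/(2M)!$, so $\|\omega_{M+1}\|_\infty/\|\omega_M\|_\infty=(M+1)/(2M+1)$. With $\sigma=\kappa L$ and $R_M=\sigma^{2M}\|\omega_M\|_\infty^2/(2M)!$,
\begin{equation*}
\frac{R_{M+1}}{R_M}=\frac{\sigma^2}{(2M+1)(2M+2)}\Big(\frac{M+1}{2M+1}\Big)^2\approx\Big(\frac{\sigma}{4M}\Big)^2 .
\end{equation*}
This drops below $1$ at $M\approx\kappa L/4=\pi L/(2\lambda)$, which is exactly the factor-$1/2$ discrepancy you wanted to avoid.

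This outcome is forced, not a bookkeeping accident. Integrating the Hermite remainder over $[-1,1]$ is precisely what produces the constant in (\ref{eq:E_N_f_upper_bound}). A ratio test on that remainder is therefore the naive ratio test in disguise. Reaching $\pi L/\lambda$ this way would require an effective type of $2\kappa L$, double the true type you established in Step 1.

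The paper instead applies the ratio test to the plain Lagrange remainder at the $M$ Gauss nodes, $E_M(t)=f^{(M)}(\xi)\prod_{m=1}^M(t-t_m)/M!$. This uses Bernstein only $M$ times and a single power of the monic Legendre polynomial. It gives $U_M=M_0\,2^M M!\,(2\pi L/\lambda)^M/(2M)!$ and the exact ratio $U_{M+1}/U_M=(2\pi L/\lambda)/(2M+1)$. The sequence therefore starts decreasing once $M>\pi L/\lambda-\tfrac12$. Each increment now trades one factor $\sigma$ against $2M+1$, rather than $\sigma^2$ against roughly $16M^2$.

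To close the gap, replace your Step 3 with this simple-node remainder. Your concerns (i) and (ii) about the $\sqrt{M}$ and $e^{O(1/M)}$ prefactors then disappear, because the ratio is computed exactly.
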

The proof for Theorem \ref{theorem:super_exponential_threshold} is provided in Appendix~\ref{appendix:super_exponential_threshold}.
\begin{remark}
It is important to note the theoretical distinction between the onset of convergence—derived via local interpolation analysis—and the global upper bound—established through quadrature error bounds. As corroborated by the simulation results in Fig.~\ref{Fig:quadrature_error_cos}, the actual quadrature error exhibits a sharp decay at $M \approx \pi L/\lambda$. This onset threshold is captured by the sequence ratio analysis of the interpolation remainder bound, which establishes the spatial sampling requirement necessary to resolve the underlying oscillatory kernels. On the other hand, the closed-form bound in (\ref{eq:envelope_E_M_f}) provides a global upper bound across the entire bandlimited function space. By depending on the $2M$-th derivative, this bound incorporates the $2M-1$ algebraic precision inherent to the NGLQ method. Thus, while the sequence ratio analysis identifies the convergence threshold ($\pi L/\lambda$), the quadrature bound determines the $(C/M)^{2M}$ super-exponential decay rate after this computational threshold is exceeded.
\end{remark}

We define the derived threshold $M_c \approx \pi L/\lambda$ as the cDoF of the continuous 1D HMIMO channel. Notably, this threshold exceeds the fundamental physical DoF ($2L/\lambda$). The ratio $\text{cDoF} / \text{DoF} = \pi/2$ reveals a fundamental requirement: to numerically evaluate the continuous HMIMO channel, the discrete computational grid demands an oversampling penalty by a factor of $\pi/2$ to guarantee the onset of super-exponential quadrature convergence.

\subsection{Tightness of the Upper Bound}
\label{sec:Tightness of the Upper Bound}
While we have established the sufficient condition of $M \approx \frac{\pi L}{\lambda}$ for the onset of super-exponential quadrature convergence, a natural question arises regarding the tightness of this theoretical threshold.
In this subsection, we demonstrate that our derived cDoF is, in fact, tight. It serves as a necessary safeguard against the worst-case spatial variation, represented by the dual end-fire channel.

\begin{figure}[htbp]
	\centerline{\includegraphics[scale=0.13]{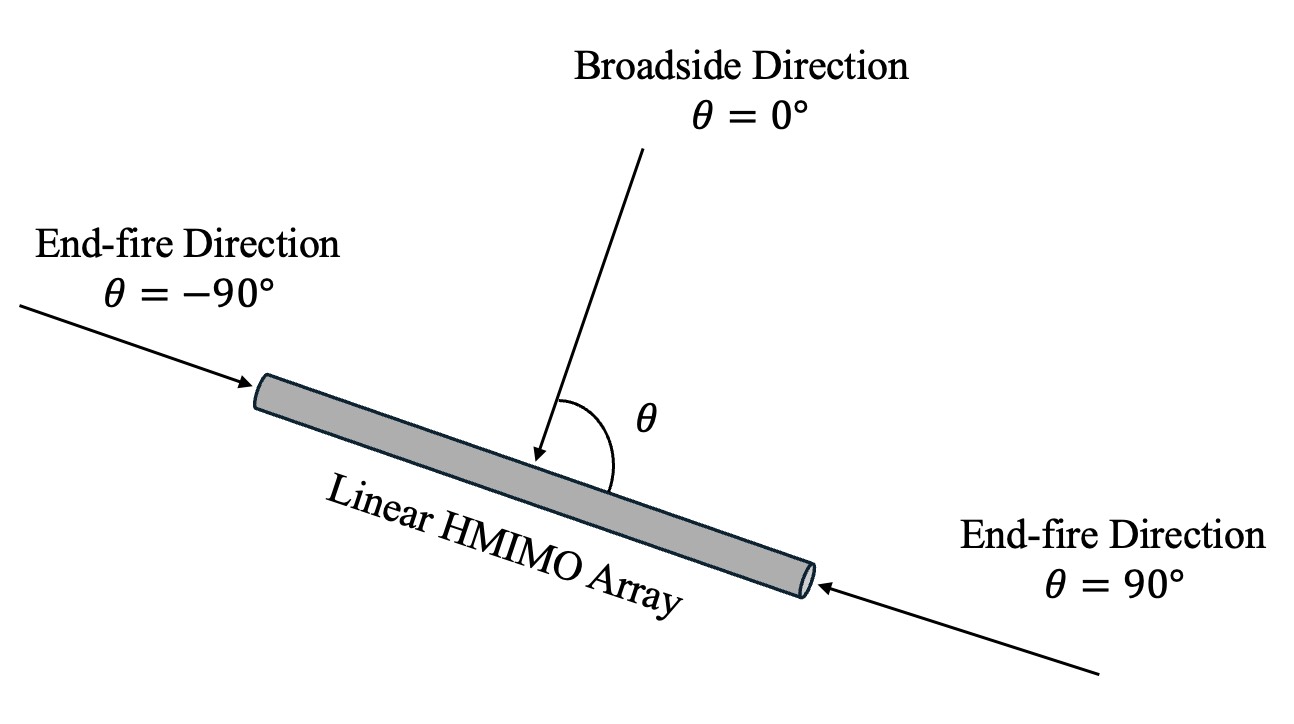}}
	\caption{Illustration of end-fire and broadside incidence scenarios.}
	\label{Fig:end_fire_broadside}
\end{figure}
Consider a deterministic, scattering-free environment where the linear HMIMO array is illuminated exclusively by two plane waves impinging from the exact left and right end-fire directions (i.e., $\theta = \pm 90^{\circ}$, as illustrated in Fig.~\ref{Fig:end_fire_broadside}). Physically, the projection of the wave vector in these directions generates the maximal spatial phase variations, $\pm 2\pi/\lambda$. Therefore, the PSD is composed of two Dirac delta impulses:
\begin{equation}
	S(k) = \frac{1}{2}\delta\Big(k - \frac{2\pi}{\lambda}\Big) + \frac{1}{2}\delta\Big(k+\frac{2\pi}{\lambda}\Big).
	\label{eq:S_k_end_fire}
\end{equation}
The mapped kernel function $\bar{K}(s, t)$ degenerates into a simple and separable harmonic form:
\begin{equation}
	\bar{K}(s, t) = \cos(\omega_{\max}s)\cos(\omega_{\max}t) + \sin(\omega_{\max}s)\sin(\omega_{\max}t),
	\label{eq:kernel_end_fire}
\end{equation}
where $\omega_{\max} = \pi L/\lambda$. Mathematically, the rank of the integral operator based on this separable kernel collapses to exactly $2$. Using the KLD integral equation in (\ref{eq:KLD}), it is straightforward to verify that the two eigenfunctions are: 
\begin{equation}
	\bar{g}_1(t) = \cos(\omega_{\max} t), \qquad  \bar{g}_2(t) = \sin(\omega_{\max} t),
\end{equation}
both of which consist of a single frequency component $\omega_{\max}$.

Since the integrand is $f_n(t) = \bar{K}(s,t)\bar{g}_n(t)$, we consider $f_1(t)$ as an example. By applying basic trigonometric identities, $f_1(t)$ is expressed as:
\begin{align}
	&f_1(t) = \nonumber \\
	&\cos(\omega_{\max} s)\cdot\frac{1 + \cos(2\omega_{\max}t)}{2} + \sin(\omega_{\max} s)\cdot\frac{\sin(2\omega_{\max} t)}{2}.
	\label{eq:f_1_t}
\end{align}
Equation (\ref{eq:f_1_t}) reveals that $f_1(t)$ exhibits a maximum spatial frequency component of $2\omega_{\max} = 2\pi L/\lambda$, which aligns with the exponential type $2\pi L/\lambda$ established in Lemma~\ref{lemma:frequency_doubling} of Appendix~\ref{appendix:super_exponential_threshold}.

\begin{figure}[htbp]
	\centerline{\includegraphics[scale=0.37]{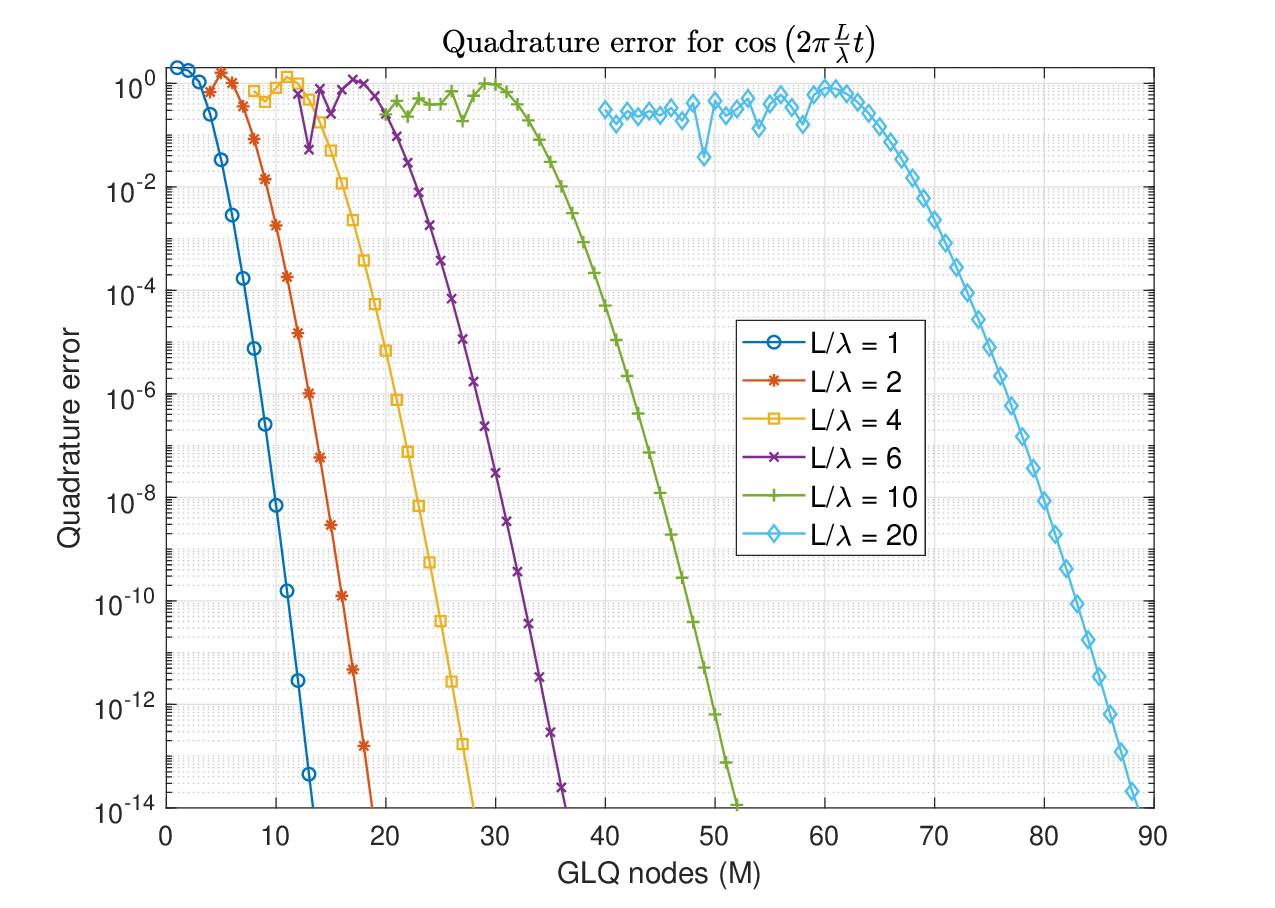}}
	\caption{Illustration of quadrature errors for $\cos(2\pi \frac{L}{\lambda}t)$ with respect to the number of GLQ nodes.}
	\vspace{-0.5cm}
	\label{Fig:quadrature_error_cos}
\end{figure}
To numerically validate this, Fig.~\ref{Fig:quadrature_error_cos} plots the absolute quadrature error of $\cos(2\pi\frac{L}{\lambda}t)$ against the number of GLQ nodes $M$. For all considered normalized aperture sizes $\frac{L}{\lambda}$, the error exhibits a super-exponential decay upon entering the convergence region. Furthermore, the numerical convergence thresholds align well with the theoretical bound $M \approx \pi L/\lambda$ derived in (\ref{eq:M_threshold}).

\subsection{Verification through PSWFs}
To further validate the proposed framework, we evaluate the NGLQ method using the classic sinc kernel. We consider the following integral equation:
\begin{equation}
	\int_{-1}^{1}\frac{\sin(c(s-t))}{\pi(s-t)}\phi_n(t)\,dt = \mu_n\phi_n(s),
	\label{eq:1D_PSWF}
\end{equation}
where $c = \frac{\pi L}{\lambda}$ is the spatial bandwidth parameter and $\phi_n(t)$ denotes the $n$-th classic prolate spheroidal wave function (PSWF)~\cite{Slepian1961PSWF}. The NGLQ method is applied to numerically discretize the kernel $\frac{\sin(c(s-t))}{\pi(s-t)}$ and reconstruct the corresponding eigenfunctions $\phi_n(t)$.
\begin{figure}[htbp]
	\centerline{\includegraphics[scale=0.29]{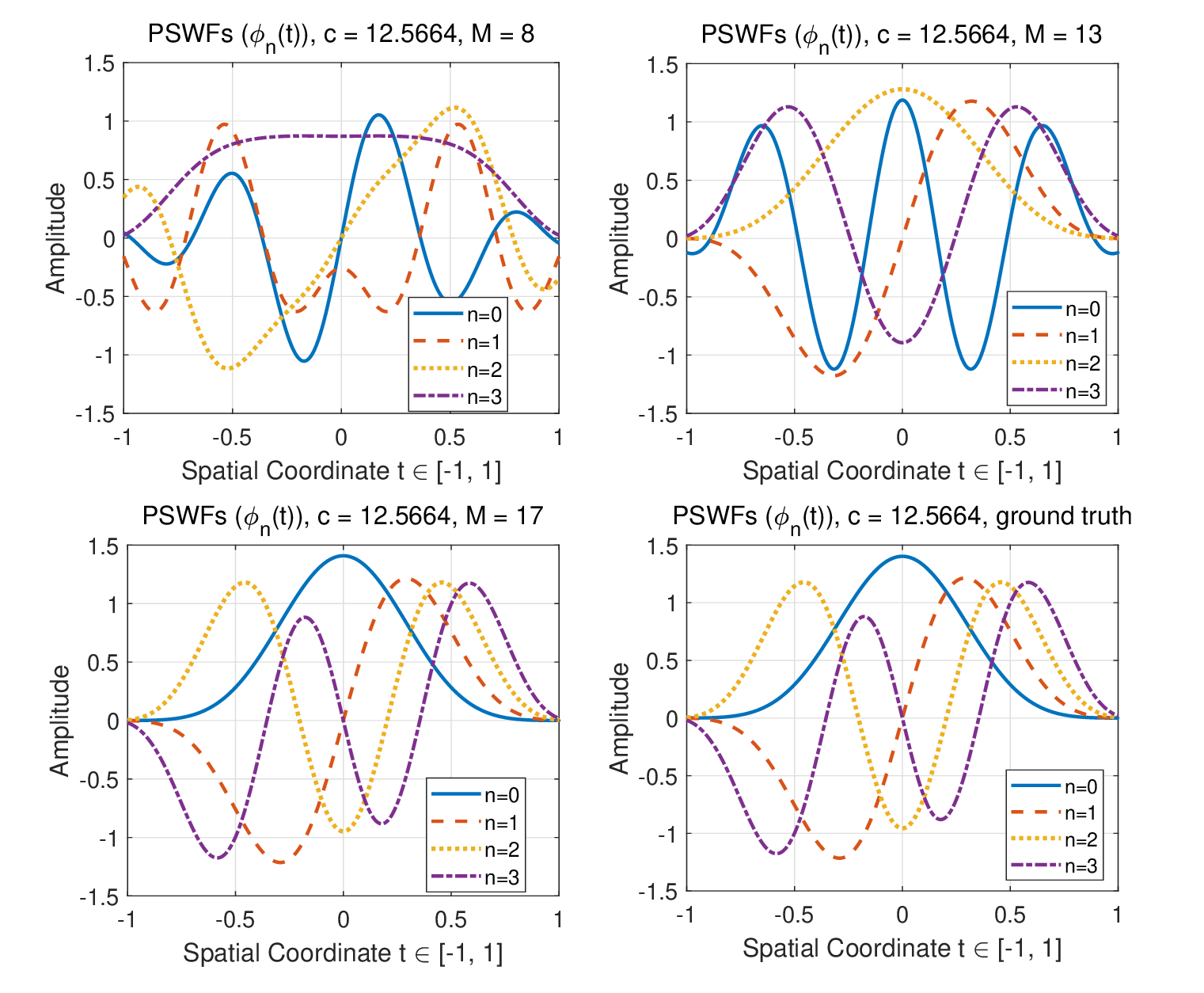}}
	\caption{PSWF reconstruction against the number of GLQ nodes, $M$, with $c = \pi L/\lambda = 4\pi$.}
	\label{Fig:PSWF_Reconstruction}
\end{figure}
Fig.~\ref{Fig:PSWF_Reconstruction} illustrates the reconstructed PSWFs for $c = 4\pi$. The ground-truth PSWFs are computed by applying the Legendre-Galerkin spectral method to the commuting Sturm-Liouville differential operator~\cite{Slepian1961PSWF}. For this specific configuration, the physical DoF is $2L/\lambda = 8$, while the derived cDoF is $\pi L/\lambda \approx 12.57$. As the number of GLQ nodes $M$ surpasses the convergence threshold (i.e., $M = 13$), the waveforms converge rapidly. Indeed, the reconstructed PSWFs obtained at $M = 17$ are virtually indistinguishable from the ground truth.

\section{Extension to 2D Planar Apertures: Tensor-Product Convergence Analysis}
While the 1D analysis establishes the fundamental connection between quadrature nodes and physical DoF, practical HMIMO systems typically employ two-dimensional (2D) planar apertures. In this section, we extend our analytical framework to a 2D rectangular aperture of size $L_x \times L_y$.

\subsection{2D Integral Operator and Tensor-Product NGLQ}
Consider a spatially stationary 2D continuous HMIMO aperture. By applying the standard affine transformations to the physical $x$- and $y$-dimensions, the spatial coordinates are mapped to the standard 2D domain $(u, v) \in [-1, 1] \times [-1, 1]$. The 2D normalized integrand is defined as $f(u, v) = \bar{K}(u', v', u, v)\bar{g}(u, v)$, where $(u', v')$ denotes the normalized observation point.

Let $\mathcal{I}_w$ denote the 1D exact continuous integration operator with respect to $w$ over $[-1, 1]$, and let $\mathcal{Q}_w^{(M)}$ denote its corresponding NGLQ operator using $M$ numerical nodes. The exact 2D integration and its tensor-product NGLQ approximation are straightforwardly formulated as:
\begin{align}
	\mathcal{I}_u\mathcal{I}_v f &= \int_{-1}^{1}\int_{-1}^{1}f(u, v)\,dv\,du, \\
	\mathcal{Q}_u^{(M_x)}\mathcal{Q}_v^{(M_y)} f &= \sum_{m=1}^{M_x}\sum_{n=1}^{M_y} w_m^{(u)}w_n^{(v)} f(u_m, v_n).
\end{align}
Consequently, the 2D quadrature error $E_{M_x, M_y}(f)$ is defined by the operator difference:
\begin{equation}
	E_{M_x, M_y}(f) = \left(\mathcal{I}_u\mathcal{I}_v - \mathcal{Q}_u^{(M_x)}\mathcal{Q}_v^{(M_y)}\right) f.
	\label{eq:2D_error_def}
\end{equation}

\subsection{2D Error Decomposition and Multivariate Properties}
To analyze $E_{M_x, M_y}(f)$ without relying on intricate multivariate error analysis, we decompose the 2D operator error into a linear combination of 1D errors. By adding and subtracting the mixed operator term $\mathcal{Q}_u^{(M_x)}\mathcal{I}_v f$, (\ref{eq:2D_error_def}) can be decoupled as:
\begin{align}
	& E_{M_x, M_y}(f) \nonumber \\ & \quad = (\mathcal{I}_u\mathcal{I}_v f - \mathcal{Q}_u^{(M_x)}\mathcal{I}_v f)  + (\mathcal{Q}_u^{(M_x)}\mathcal{I}_v f - \mathcal{Q}_u^{(M_x)}\mathcal{Q}_v^{(M_y)} f) \notag \\
	& \quad = E_{M_x}\left(\mathcal{I}_v f\right) + \mathcal{Q}_u^{(M_x)}\left(E_{M_y}\big(f(u, \cdot)\big)\right),
	\label{eq:2D_error_decomp}
\end{align}
where $E_{M_x}(\cdot)$ and $E_{M_y}(\cdot)$ are the 1D quadrature errors acting along their respective axes. By the triangle inequality, the absolute 2D quadrature error is bounded by:
\begin{align}
	\left|E_{M_x, M_y}(f)\right| \leq & \left|E_{M_x}\left(\mathcal{I}_v f\right)\right| + \sum_{m=1}^{M_x}w_m^{(u)}\left|E_{M_y}\left(f(u_m, \cdot)\right)\right|.\label{eq:2D_error_triangle}
\end{align}
To bound the components in (\ref{eq:2D_error_triangle}), we must establish the analytical properties of $f(u,v)$ induced by the constraints of the 2D wavenumber domain. The following lemma extends the 1D Paley-Wiener theorem to the multivariate case.
\begin{lemma}[Multivariate Extension of Paley-Wiener Theorem]
	For a 2D spatially stationary continuous HMIMO aperture operating under the spatial wavenumber $\kappa = 2\pi/\lambda$, the 2D integrand $f(u, v)$ is a multivariate entire function. Specifically, it is of exponential type $2\pi L_x/\lambda$ with respect to $u$, and of exponential type $2\pi L_y/\lambda$ with respect to $v$.
\end{lemma}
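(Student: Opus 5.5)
We will mirror the 1D argument behind Lemma~\ref{lemma:frequency_doubling} of Appendix~\ref{appendix:super_exponential_threshold}, applying it separately in each coordinate of the product structure $f(u,v)=\bar{K}(u',v',u,v)\,\bar{g}(u,v)$. The plan has three steps. First we establish the analytic properties of the kernel, then those of the eigenfunction, and finally we combine them through the product rule for exponential types.

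\textbf{Kernel.} By spatial stationarity and the plane-wave model (\ref{eq:h_pm_x_y_z}), restricted to the aperture plane $z=0$, the kernel is the inverse Fourier transform of a finite, nonnegative spectral measure $S(k_x,k_y)$ supported in the disk $k_x^2+k_y^2\le\kappa^2$. Hence
\begin{equation*}
\bar{K}(u',v',u,v)=\iint S(k_x,k_y)\,e^{j k_x \frac{L_x}{2}(u'-u)+j k_y\frac{L_y}{2}(v'-v)}\,dk_x\,dk_y .
\end{equation*}
The support lies inside the square $|k_x|\le\kappa$, $|k_y|\le\kappa$ and $S$ has finite total mass (unit variance). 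We may therefore differentiate under the integral and extend $u,v$ to $\mathbb{C}^2$. This yields an entire function of two complex variables obeying
\begin{equation*}
|\bar{K}|\le e^{\kappa\frac{L_x}{2}|\mathrm{Im}\,u|+\kappa\frac{L_y}{2}|\mathrm{Im}\,v|}.
\end{equation*}
This is the easy direction of the several-variable Paley--Wiener(--Schwartz) theorem. With $\kappa L_x/2=\pi L_x/\lambda$, the kernel has type $\pi L_x/\lambda$ in $u$ and $\pi L_y/\lambda$ in $v$, separately for each variable.

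\textbf{Eigenfunction.} Take $\mu_n>0$. The eigen-equation (\ref{eq:KLD}) gives
\begin{equation*}
\bar{g}(u,v)=\mu_n^{-1}\,\tfrac{L_xL_y}{4}\int\!\!\int \bar{K}(u,v,u'',v'')\,\bar{g}(u'',v'')\,du''dv''.
\end{equation*}
Because the integration is over a compact square and $\bar g\in L^2$, the same exponential bound carries over, after Cauchy--Schwarz, to a constant times $e^{\frac{\pi L_x}{\lambda}|\mathrm{Im}\,u|+\frac{\pi L_y}{\lambda}|\mathrm{Im}\,v|}$. Holomorphy in each variable follows from Morera's theorem or from differentiation under the integral, and Hartogs' theorem then gives joint analyticity. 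So $\bar g$ extends to an entire function with the same partial types as the kernel.

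\textbf{Product.} Multiplying the two bounds, $f$ is entire on $\mathbb{C}^2$ with
\begin{equation*}
|f(u,v)|\le C\,e^{\frac{2\pi L_x}{\lambda}|\mathrm{Im}\,u|+\frac{2\pi L_y}{\lambda}|\mathrm{Im}\,v|}.
\end{equation*}
Fixing $v$ gives exponential type $2\pi L_x/\lambda$ in $u$, and symmetrically $2\pi L_y/\lambda$ in $v$. This is the claimed frequency doubling in each direction, matching the 1D example (\ref{eq:f_1_t}).

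The main subtlety is the choice of bounding box for the disk support. Bounding the disk by the square, rather than using the joint support function $\kappa\sqrt{(L_x\mathrm{Im}\,u)^2+(L_y\mathrm{Im}\,v)^2}/2$, is what yields separate per-axis types $\kappa L_x/2$ and $\kappa L_y/2$. We need exactly these per-axis types, since the 1D GLQ bounds are applied axis by axis in (\ref{eq:2D_error_triangle}). We must also check that the constants are uniform in the frozen variable, so that the 1D results apply to $\mathcal{I}_vf$ and to $f(u_m,\cdot)$. Integrating over $v\in[-1,1]$ preserves the type in $u$, because the bound factorizes.
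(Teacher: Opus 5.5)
Your proposal is correct and follows essentially the same route as the paper: enclose the disk $k_x^2+k_y^2\le\kappa^2$ in the square $[-\kappa,\kappa]^2$, apply the easy direction of the Paley--Wiener theorem for rectangular supports with the scalings $L_x/2$ and $L_y/2$, and then add the types of kernel and eigenfunction as in the 1D frequency-doubling lemma. Your version is more explicit than the paper's: you derive the bound $e^{\kappa\frac{L_x}{2}|\mathrm{Im}\,u|+\kappa\frac{L_y}{2}|\mathrm{Im}\,v|}$ directly and transfer it to $\bar g$ through the eigen-equation, which also gives the boundedness of $\bar g$ on $\mathbb{R}^2$ that the later Bernstein step needs; note, however, that the square is only a convenience for a factorized bound, since the disk's support function gives the same per-axis type once the other variable is held fixed.
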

\begin{proof}
	The physical constraints of the 2D wavenumber domain restrict the wave vectors $(k_x, k_y)$ to the spectral disk $k_x^2 + k_y^2 \leq \kappa^2$. Since this circular physical spectral disk is enclosed by the rectangular bounding domain $[-\kappa, \kappa] \times [-\kappa, \kappa]$, the maximum spatial frequencies along the orthogonal axes are guaranteed to satisfy $|k_x| \leq \kappa$ and $|k_y| \leq \kappa$.
	
	According to the multivariate Paley-Wiener theorem for rectangular supports~\cite[Chapter 3]{nikolskii1975approximation}, restricting the Fourier support to a separable rectangular region $[-a, a] \times [-b, b]$ directly ensures that the corresponding entire function possesses separable exponential types $a$ and $b$ along its respective axes.
	Therefore, by incorporating the coordinate scaling factors ($L_x/2$ and $L_y/2$) from the standard affine transformations, and following the frequency-doubling effect inherent to the integrand product established in Lemma \ref{lemma:frequency_doubling} of Appendix~\ref{appendix:super_exponential_threshold}, the integrand $f(u, v)$ structurally inherits the decoupled exponential types of $2\pi L_x/\lambda$ for $u$ and $2\pi L_y/\lambda$ for $v$. 
\end{proof}
Exploiting this separable exponential type property, we can now apply Bernstein's inequality~\cite{Rahman2009Bernstein} for entire functions to the partial derivatives along each independent axis. Since the remainder of an $M$-point GLQ rule depends on the $2M$-th derivative, the bounds parameterized by the GLQ orders $M_x$ and $M_y$ are given by:
\begin{align}
	\max_{(u,v)\in[-1,1]^2}\left|\frac{\partial^{2M_x}f(u, v)}{\partial u^{2M_x}}\right|
	\leq \left(\frac{2\pi L_x}{\lambda}\right)^{2M_x} \sup_{(u,v)\in\mathbb{R}^2} |f(u, v)|,
	\label{eq:partial_u_bound}
\end{align}
\begin{align}
	\max_{(u,v)\in[-1,1]^2}\left|\frac{\partial^{2M_y}f(u, v)}{\partial v^{2M_y}}\right| 
	\leq \left(\frac{2\pi L_y}{\lambda}\right)^{2M_y} \sup_{(u,v)\in\mathbb{R}^2} |f(u, v)|.
	\label{eq:partial_v_bound}
\end{align}

\subsection{Super-Exponential Convergence of 2D HMIMO Arrays}
Building upon the operator decomposition and the partial derivative bounds established above, we now present the global convergence theorem for 2D rectangular apertures.
\begin{theorem}\label{theorem:2D_convergence}
	For a 2D rectangular HMIMO aperture of size $L_x \times L_y$, the absolute quadrature error of reconstructing the continuous channel via a tensor-product NGLQ method is strictly bounded by the sum of two independent super-exponential upper bounds:
	\begin{align}
		&|E_{M_x, M_y}(f)| \nonumber \\
		 & \quad \leq \tilde{C}(M_x)\left(\frac{e\pi L_x}{2\lambda M_x}\right)^{2M_x} + \tilde{C}(M_y)\left(\frac{e\pi L_y}{2\lambda M_y}\right)^{2M_y},
		\label{eq:2D_envelope}
	\end{align}
	where $\tilde{C}(M_x)$ and $\tilde{C}(M_y)$ are algebraically decaying pre-factors proportional to the global supremum of $|f(u,v)|$ over $\mathbb{R}^2$. Furthermore, the computational onset of 2D numerical convergence requires the number of discrete nodes to satisfy:
	\begin{equation}
		M_x > \frac{\pi L_x}{\lambda} \quad \text{and} \quad M_y > \frac{\pi L_y}{\lambda}.
		\label{eq:2D_DoF}
	\end{equation}
\end{theorem}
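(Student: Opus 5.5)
The plan is to start from the operator splitting in (\ref{eq:2D_error_decomp}) and the triangle-inequality bound (\ref{eq:2D_error_triangle}), and to bound each of the two terms by reducing it to the 1D result of Theorem~\ref{theorem:super_exponential_threshold}.

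\textbf{First term.} Set $F(u) = (\mathcal{I}_v f)(u) = \int_{-1}^{1} f(u,v)\,dv$. By the Multivariate Extension of Paley-Wiener Theorem, $f(u,v)$ is entire of exponential type $2\pi L_x/\lambda$ in $u$ for each fixed $v$. Integrating in $v$ over a compact interval preserves entireness in $u$. It also preserves the exponential-type bound, since the Fourier support in $u$ of each slice lies in the same interval. Hence $F$ is entire of exponential type $2\pi L_x/\lambda$. Moreover $\sup_{u\in\mathbb{R}}|F(u)| \le 2\sup_{\mathbb{R}^2}|f|$.

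The derivative bound (\ref{eq:partial_u_bound}) passes under the integral sign, so
\[
|F^{(2M_x)}(u)| \le 2\left(\frac{2\pi L_x}{\lambda}\right)^{2M_x}\sup_{\mathbb{R}^2}|f|.
\]
I then reuse the chain from Appendix~\ref{appendix:super_exponential_threshold}. This means inserting the bound into the classical GLQ remainder $E_M(g) = \frac{2^{2M+1}(M!)^4}{(2M+1)[(2M)!]^3} g^{(2M)}(\xi)$ and applying the Stirling bounds with the $\exp(\frac{1}{12M})$-type corrections. Those steps produce $\tilde{C}(M)$. The result is $|E_{M_x}(F)| \le \tilde{C}(M_x)\big(\frac{e\pi L_x}{2\lambda M_x}\big)^{2M_x}$, where the constant absorbs the factor $2$ from the $v$-integration into the proportionality to $\sup|f|$.

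\textbf{Second term.} For each node $u_m$, the slice $v\mapsto f(u_m,v)$ is entire of type $2\pi L_y/\lambda$ and satisfies (\ref{eq:partial_v_bound}). The 1D argument therefore gives $|E_{M_y}(f(u_m,\cdot))| \le \tilde{C}(M_y)\big(\frac{e\pi L_y}{2\lambda M_y}\big)^{2M_y}$ uniformly in $m$. Since the GLQ weights are positive with $\sum_m w_m^{(u)} = 2$, the weighted sum is bounded by $2$ times this quantity. The factor $2$ is again absorbed into the prefactor. Adding the two terms yields (\ref{eq:2D_envelope}).

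\textbf{Onset condition (\ref{eq:2D_DoF}).} Each summand in (\ref{eq:2D_envelope}) is exactly a 1D envelope in its own variable. The total error decays only once both summands enter their decaying regime, because the sum is dominated by the larger term. Applying part~1) of Theorem~\ref{theorem:super_exponential_threshold} separately along each axis then gives $M_x > \pi L_x/\lambda$ and $M_y > \pi L_y/\lambda$. That part uses the sequence-ratio analysis of the interpolation remainder, applied to $F$ and to the slices $f(u_m,\cdot)$, whose types are $2\pi L_x/\lambda$ and $2\pi L_y/\lambda$.

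\textbf{Main obstacle.} I expect the hardest step to be justifying that $\mathcal{I}_v f$ inherits the exponential type and the sup bound, and that the 1D constant $\tilde{C}$ carries over. This needs the supremum of $f$ over $\mathbb{R}^2$, not just over $[-1,1]^2$. That supremum is finite because $f$ is bounded on real arguments as a product of band-limited kernel and eigenfunction extensions. I would handle this by differentiating under the integral (dominated convergence on the compact $v$-interval) and by invoking Lemma~\ref{lemma:frequency_doubling} slice-wise.
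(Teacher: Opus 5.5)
Your proposal is correct and follows essentially the same route as the paper's proof in Appendix~\ref{appendix:2D_convergence}. You split the error via (\ref{eq:2D_error_decomp}), pass the $2M_x$-th $u$-derivative bound under the $v$-integral of the marginal (picking up a factor $2$), bound each nodal slice by the 1D estimate with the global supremum and use $\sum_m w_m^{(u)}=2$, and absorb both factors of $2$ into $\tilde{C}$. Your per-axis appeal to part 1) of Theorem~\ref{theorem:super_exponential_threshold} for (\ref{eq:2D_DoF}) matches how the paper obtains the onset condition, which it does not derive separately in the appendix.
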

The detailed derivation is provided in Appendix~\ref{appendix:2D_convergence}.
\begin{remark}
	It is important to observe the discrepancy between the convergence threshold of the tensor-product NGLQ (requiring a cDoF of $M_x M_y \approx \pi^2A_{\text{area}}/\lambda^2$) and the classic physical DoF for a 2D planar aperture ($\pi A_{\text{area}}/\lambda^2$). The proposed grid yields an efficiency ratio (pDoF to cDoF) of $1/\pi \approx 32\%$.
	
	This $68\%$ computational redundancy is a twofold penalty. First, the 1D quadrature oversampling penalty (i.e., the factor of $\pi/2$) is geometrically squared in 2D to $\pi^2/4$. Second, the separability penalty introduces a geometric mismatch: while physical propagating waves are confined to a spectral disk ($k_x^2 + k_y^2 \leq \kappa^2$), the separable tensor-product operator forces the GLQ nodes to resolve a larger rectangular bounding box ($|k_x|\leq \kappa, |k_y|\leq \kappa$). Consequently, redundant cDoFs are inevitably allocated to the non-propagating spectral corners ($k_x^2 + k_y^2 > \kappa^2$) where the actual propagating energy is zero. This observation underscores a profound trade-off: while separable grids offer mathematical tractability, achieving the optimal 2D continuous-to-discrete mapping will necessitate non-separable spatial sampling strategies in future HMIMO architectures.
\end{remark}

\subsection{Environment-Aware NGLQ}
Given the analysis in Section~\ref{sec:Tightness of the Upper Bound}, it is evident that the required number of GLQ nodes is related to the maximal incidence angle $\theta_{\max}$. In practical wireless network deployments (e.g., cellular network sectors), the angular spectrum is often restricted within a specific angular sector, especially in terms of vertical coverage. In this subsection, we demonstrate how the NGLQ framework intrinsically adapts to such environment-aware constraints, leading to a substantial reduction in cDoFs.

We consider a 2D rectangular aperture located in the $xy$- plane, where the $z$-axis corresponds to the broadside direction of the aperture.
Let $\theta$ and $\phi$ denote the elevation and azimuth angles characterizing the arrival directions of the incident waves.
Assuming the angular spectrum is limited to a specific spatial sector, the angles are bounded by $\theta \in [-\theta_{\max}, \theta_{\max}]$ and $\phi \in [-\phi_{\max}, \phi_{\max}]$, where $\theta_{\max}, \phi_{\max} \in [0, \pi/2]$ define the maximum angular spreads. Under this sectorized propagation geometry, the corresponding support in the Cartesian wavenumber domain $(k_x, k_y)$ is geometrically truncated. To construct the separable tensor-product NGLQ operator, we determine the tightest independent rectangular bounding box for the truncated spectral support:
\begin{equation}
	k_x \in \left[-\frac{2 \pi}{\lambda}\sin \theta_{\max}, \frac{2 \pi}{\lambda}\sin \theta_{\max}\right],
	\label{eq:k_x_limited}
\end{equation}
\begin{equation}
	k_y \in \left[-\frac{2 \pi}{\lambda}\sin\theta_{\max}\sin\phi_{\max}, \frac{2\pi}{\lambda}\sin\theta_{\max}\sin\phi_{\max}\right].
	\label{eq:k_y_limited}
\end{equation}
This environment-aware wavenumber domain mapping is illustrated in Fig.~\ref{Fig:Limited_wavenumber_domain}.
\begin{figure}[htbp]
	\centerline{\includegraphics[scale=0.12]{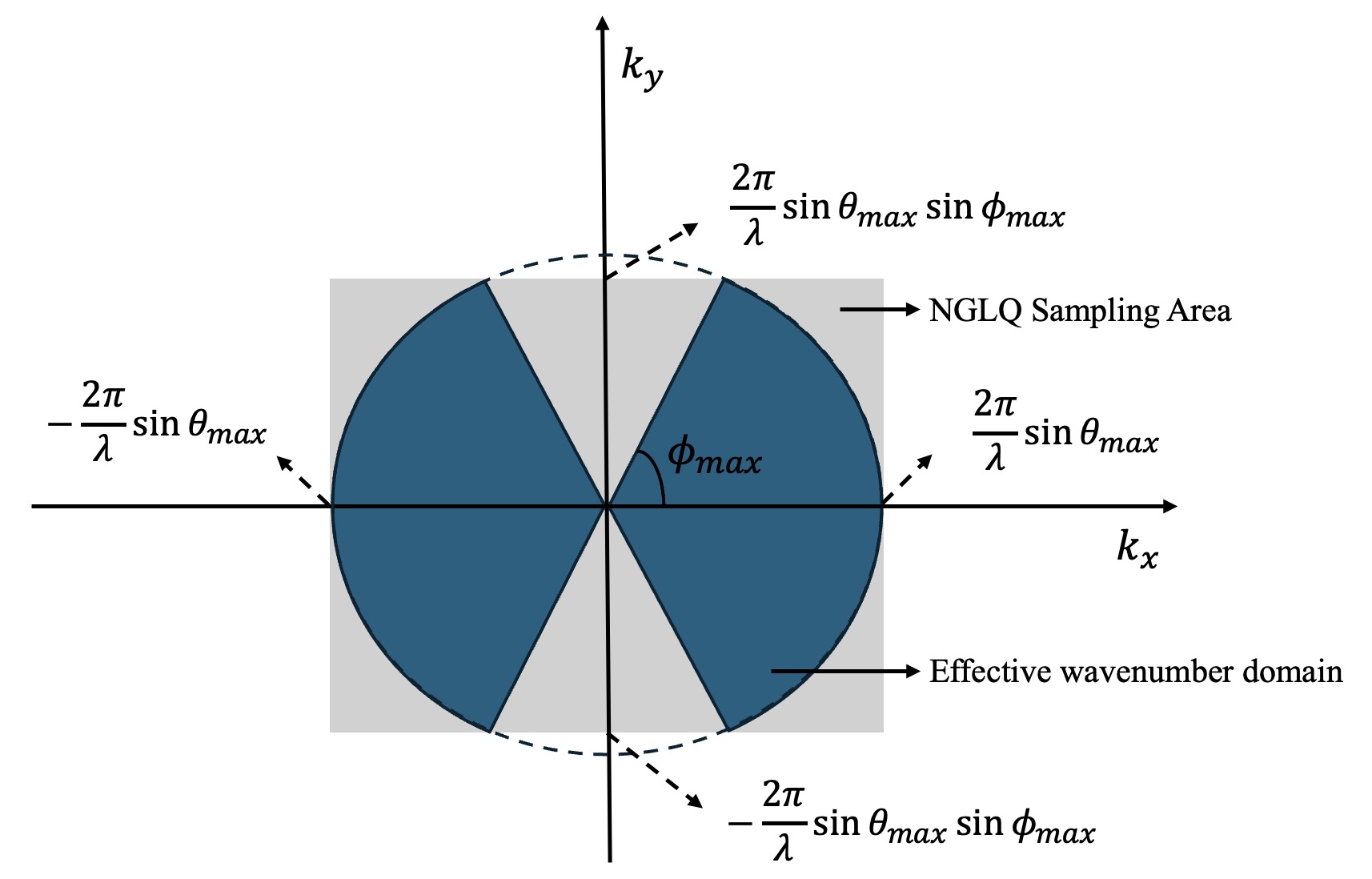}}
	\caption{Illustration of the effective wavenumber domain and the corresponding NGLQ sampling area with $\theta\in[-\theta_{\max}, \theta_{\max}]$ and $\phi \in [-\phi_{\max}, \phi_{\max}]$.}
	\label{Fig:Limited_wavenumber_domain}
\end{figure}
Following the mathematical framework established in Theorem~\ref{theorem:2D_convergence}, the number of GLQ nodes required to achieve super-exponential convergence is given by:
\begin{equation}
	M_x  > \frac{\pi L_x}{\lambda}\sin\theta_{\max} \;\; \text{and} \;\; M_y > \frac{\pi L_y}{\lambda}\sin\theta_{\max}\sin\phi_{\max}.
\end{equation}
Consequently, if the elevation and azimuth ranges of the antenna array are physically restricted (e.g., in sectorized cellular networks), the number of NGLQ nodes required to characterize the corresponding channels can be effectively reduced. 

\section{Effective DoF and Numerically Stable Regime}
In this section, we first evaluate the numerical reconstruction error (RE) of the sinc kernel $\frac{\sin(c(s-t))}{\pi(s-t)}$, and then provide a detailed analysis of the numerical stability of applying NGLQ for kernel reconstruction. Building upon this analysis, we propose an aperture-size-dependent numerically stable regime for selecting the number of GLQ nodes.

We define the empirical RE for evaluating the continuous kernel reconstruction at a predefined reference observation point $s$ (e.g., $s = -1$ at the boundary or $s = 0$ at the center of the normalized interval $[-1, 1]$) over a dense grid of test points as:
\begin{equation}
	\text{RE} = \sum_{i=1}^{N_{\text{test}}}\Big|\bar{K}(s, t_i) - \hat{K}(s, t_i)\Big|/N_{\text{test}},
	\label{eq:RE_definition}
\end{equation}
where $\hat{K}(s,t_i)$ denotes the kernel reconstructed by the NGLQ method at the $i$-th test point $t_i$, and $N_{\text{test}}$ is the total number of evaluation points. The numerical RE curves for the sinc kernel under various normalized linear aperture sizes are presented in Fig.~\ref{Fig:RE_Sinc_1D}.
\begin{figure}[htbp]
	\centerline{\includegraphics[scale=0.46]{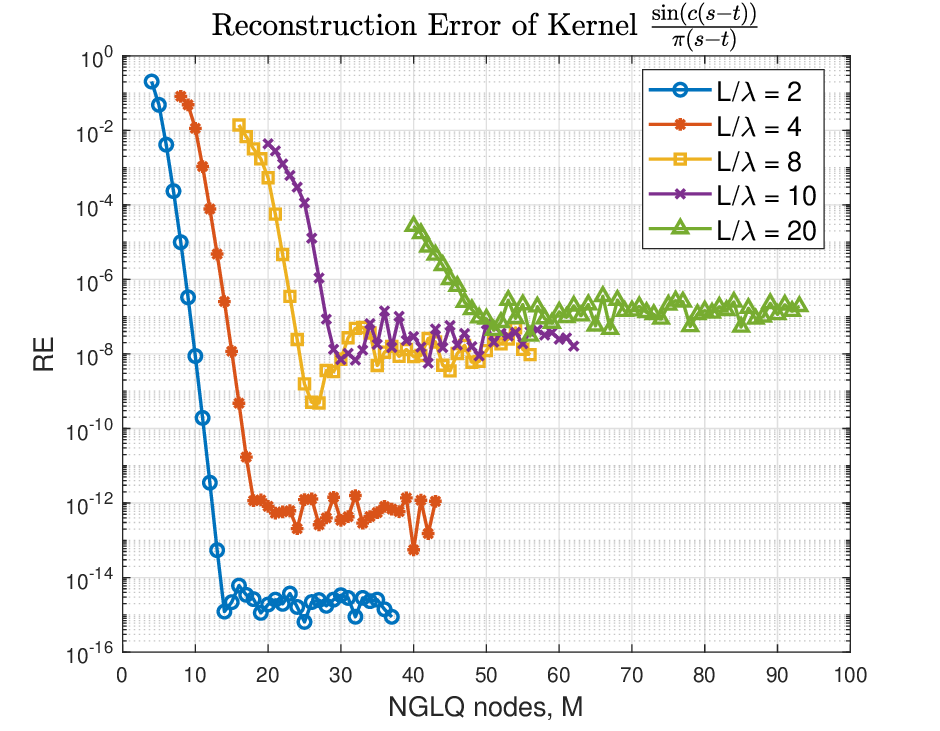}}
	\caption{RE of $\frac{\sin(c(s-t))}{\pi(s-t)}$ against NGLQ nodes under various normalized linear aperture sizes.}
	\vspace{-0.5cm}
	\label{Fig:RE_Sinc_1D}
\end{figure}
For all considered linear aperture sizes, the number of NGLQ nodes $M$ starts from the physical DoF $2L/\lambda$. As expected, the RE initially decays rapidly as $M$ increases. We first note that the isotropic kernel includes a continuous spectrum of multiple frequency components, rather than a single extreme component as in the end-fire case. 
Because the lower-frequency components within this spectrum converge much earlier than the worst-case upper bound $\pi L/\lambda$, the overall RE exhibits a rapid initial decay even before $M$ reaches the theoretical cDoF threshold. 

However, as observed in Fig.~\ref{Fig:RE_Sinc_1D}, a critical phenomenon is that the RE eventually hits an aperture-dependent error floor. This phenomenon stems from a conflict between the super-exponential convergence threshold ($M \approx \pi L/\lambda$) and the numerical stability of the Nystr\"{o}m interpolation in (\ref{eq:eigenfunction_interpolation}). Under finite machine precision (e.g., $\epsilon \approx 10^{-16}$), the discrete kernel matrix $\bar{\mathbf{K}}$ possesses an $\epsilon$-dependent finite effective rank. We denote this rank as the eDoF, $N_{\text{eDoF}}(\epsilon)$. For spectral indices $n > N_{\text{eDoF}}(\epsilon)$, the numerical eigenvalues $\tilde{\mu}_n$ drop to the machine precision level. Because reconstructing the continuous eigenfunctions via (\ref{eq:eigenfunction_interpolation}) requires dividing by $\tilde{\mu}_n$, evaluating these higher-order modes triggers an ill-conditioned division by near-zero numerical noise. This noise amplification directly causes the observed error floors. In the following subsections, we will first theoretically characterize the distribution of the eDoF for both linear and rectangular apertures. Subsequently, we establish a numerically stable regime for reconstructing the channel or kernels.

\subsection{The eDoF for Linear Arrays}
For both linear and rectangular arrays, we derive the eDoF by invoking the Szeg\H{o}-Widom asymptotic expansion~\cite{Widom1982, sobolev2013pseudo, Sobolev2017Functions}. Specifically, by analyzing the trace of the 1D integral operator, we show that the 1D eDoF $N_{\text{eDoF}}^{(1D)}(\epsilon)$ has the following approximate analytical form:
\begin{equation}
	N_{\text{eDoF}}^{(1D)}(\epsilon)  \approx \left\lceil \frac{2L}{\lambda} + \frac{1}{\pi^2}\ln\left(\frac{1-\epsilon}{\epsilon}\right)\ln\left(\pi\frac{L}{\lambda}\right)\right\rceil.
	\label{eq:N_effect_rank}
\end{equation}
Physically, this asymptotic expression decomposes the 1D eDoF into the classical length-law $2L/\lambda$ and a logarithmic edge-correction term capturing the finite aperture truncation. The detailed derivation is provided in Appendices~\ref{appdix:Widom_conjecture} and \ref{sec:1D_proof}. We note that the approximation in (\ref{eq:N_effect_rank}) aligns with the proven classic result in~\cite{landau1980eigenvalue}. 

Using (\ref{eq:N_effect_rank}) and setting $\epsilon = 10^{-16}$, the values of $N_{\text{eDoF}}^{(1D)}(\epsilon)$ for the normalized aperture sizes considered in Fig. \ref{Fig:RE_Sinc_1D} are 11, 18, 29, 33, and 56, which are close to the number of NGLQ nodes yielding the smallest RE values. We emphasize that the eDoF approximated by (\ref{eq:N_effect_rank}) is derived specifically for the isotropic case (i.e., kernel $\frac{\sin(c(s-t))}{\pi(s-t)}$). For non-isotropic scattering environments, the valid spatial incident angles constitute a subset of those in the isotropic case. 
This implies that the corresponding spectral support in the wavenumber domain is bounded by the maximum support of the isotropic counterpart~\cite{Pizzo2020Spatially,Pizzo2022Fourier}. 
Under a constant power constraint, the energy of the non-isotropic kernel is more densely concentrated within the initial leading eigenmodes. 
Therefore, its eigenvalue spectrum enters the rapid decay phase earlier, rendering the eDoF of the non-isotropic kernel upper bounded by that derived from the isotropic case~\cite{Poon2005Degrees}.

\subsection{The eDoF for 2D Rectangular Arrays}
For 2D rectangular arrays, to the best of our knowledge, Widom's conjecture~\cite{Widom1982} for the specific case of discontinuous test functions has not been proven. Therefore, we propose a semi-analytical eDoF formulation for 2D rectangular arrays in this subsection, which is physically grounded and further validated by numerical results.
\begin{theorem}[eDoF for 2D Planar HMIMO]
	For a continuous planar rectangular aperture $\omega$ with physical dimensions $L_x \times L_y$ operating in a 3D isotropic scattering environment, let the physical wavenumber domain $K$ be an isotropic disk of radius $\kappa = 2\pi/\lambda$. Then, for any given energy containment threshold $1-\epsilon$ (where $\epsilon \rightarrow 0^{+}$), the eDoF $N_{\text{eDoF}}^{(2D)}(\epsilon)$ of the spatial correlation operator has the following asymptotic expansion:
	\begin{align}
		&N_{\text{eDoF}}^{(2D)}(\epsilon) \approx \left\lceil\frac{\pi A_{\text{array}}}{\lambda^2} + \right. \nonumber \\
		& \; \left. \frac{2}{\pi^2}\ln\left( \frac{1-\epsilon}{\epsilon}\right) \left[\frac{L_y}{\lambda}\ln\left(\pi\frac{L_x}{\lambda}\right)+\frac{L_x}{\lambda}\ln\left(\pi \frac{L_y}{\lambda}\right)\right]\right\rceil,
		\label{eq:N_rank_2D}
	\end{align}
	where $A_{\text{array}} = L_xL_y$ is the physical area of the aperture.
\end{theorem}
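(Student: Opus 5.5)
The plan is to write the eDoF as a trace functional and then apply the Szeg\H{o}-Widom expansion. The eigenvalues of the truncated operator are normalized so that they lie in $[0,1]$. With $\chi_{\omega}$ the indicator of the aperture and $\chi_K$ that of the wavenumber disk, the operator is $T=\chi_\omega \mathcal{F}^{-1}\chi_K\mathcal{F}\chi_\omega$. The counting function is then
\[
N_{\text{eDoF}}^{(2D)}(\epsilon)=\mathrm{tr}\, g_\epsilon(T), \qquad g_\epsilon=\mathbf{1}_{(\epsilon,1]} ,
\]
where $g_\epsilon$ is a discontinuous step test function. In the scaled form $\omega\to\alpha\omega$, the multidimensional Widom conjecture, proved by Sobolev for piecewise smooth domains and continuous test functions \cite{Sobolev2015Wiener,Sobolev2017Functions}, gives
\[
\mathrm{tr}\, g(T)=\frac{|\omega||K|}{(2\pi)^2}\,g(1)+\frac{\ln \alpha}{4\pi^2}\,U(g)\,\frac{1}{2\pi}\int_{\partial\omega}\int_{\partial K}|n_{\omega}\cdot n_K|\,dS\,dS+o(\ln\alpha).
\]
Here $U(g)=\frac{1}{\pi^2}\cdot\pi^2\int_0^1\frac{g(t)-t\,g(1)}{t(1-t)}dt$, up to the Widom constant normalization, which I will fix by matching the 1D case. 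The leading term is immediate: $|K|=\pi\kappa^2$, so the Weyl term is $L_xL_y\pi\kappa^2/(4\pi^2)=\pi A_{\text{array}}/\lambda^2$.

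\textbf{Step 1 (test function).} For $g=\mathbf{1}_{(\epsilon,1]}$ the integral evaluates to
\[
\int_\epsilon^1\frac{dt}{t(1-t)}\cdot 1-\int_0^1\frac{dt}{1-t}\cdot\ldots ,
\]
and, regularized as in Landau--Widom, gives $\ln\frac{1-\epsilon}{\epsilon}$. I will calibrate the constant by reproducing the 1D formula (\ref{eq:N_effect_rank}), which is consistent with \cite{landau1980eigenvalue}. That calibration fixes the prefactor $1/\pi^2$ per boundary "endpoint pair".

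\textbf{Step 2 (boundary integral).} The rectangle's boundary has two edges of length $L_y$ with normal $\pm e_x$ and two edges of length $L_x$ with normal $\pm e_y$. The integral over $\partial K$ is $\int_0^{2\pi}|\cos\theta|\kappa\,d\theta=4\kappa$ for the first family, and the same for the second family. The boundary term is therefore proportional to $L_y\kappa+L_x\kappa$, i.e.\ to $L_y/\lambda$ and $L_x/\lambda$.

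\textbf{Step 3 (logarithm scale).} The single $\ln\alpha$ of the isotropic scaling is replaced edgewise by the aperture length transverse to that edge. Physically, each $x$-edge acts as a 1D truncation problem of bandwidth $\pi L_x/\lambda$, so it carries $\ln(\pi L_x/\lambda)$, exactly as in the 1D case; similarly, the $y$-edges carry $\ln(\pi L_y/\lambda)$. This is the anisotropic semi-analytic step. I will justify it by a local separation-of-variables argument: near an $x$-edge, the operator decomposes into 1D sinc-type operators along $x$ with effective bandwidth $\sqrt{\kappa^2-k_y^2}$, parametrized by $k_y$. Integrating the 1D Landau--Widom edge term over $k_y$, with density $L_y/(2\pi)$, gives the transverse weighting. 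Collecting constants should give $\frac{2}{\pi^2}\ln\frac{1-\epsilon}{\epsilon}\bigl[\frac{L_y}{\lambda}\ln(\pi L_x/\lambda)+\frac{L_x}{\lambda}\ln(\pi L_y/\lambda)\bigr]$. Corners are absorbed into $o(\ln)$ by \cite{Sobolev2015Wiener}. The ceiling then gives an integer count.

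\textbf{Main obstacle.} The main obstacle is the step function: Sobolev's theorem requires continuous $g$, and the 2D discontinuous case is open. I will therefore sandwich $\mathbf{1}_{(\epsilon,1]}$ between continuous piecewise-linear functions $g_\delta^{\pm}$ that differ from it only on $[\epsilon-\delta,\epsilon+\delta]$. Applying Sobolev's result to each gives upper and lower asymptotics whose $U(g_\delta^\pm)\to\ln\frac{1-\epsilon}{\epsilon}$ as $\delta\to0$. The interchange of the limits $\delta\to0$ and $\alpha\to\infty$ is not rigorously controlled, which is why the result is stated as a semi-analytical approximation to be validated numerically.
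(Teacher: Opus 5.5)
Your route is the paper's route. You use Widom's two-term formula with $\mathcal{U}(f)=\frac{1}{4\pi^2}\int_0^1\frac{f(t)-tf(1)}{t(1-t)}\,dt$ and the volume term $\pi A_{\text{array}}/\lambda^2$. You take $\oint_{\partial K}|\mathbf{n}_r\cdot\mathbf{n}_k|\,dS_k=4\kappa$ for each pair of parallel edges. You then replace $\ln\alpha$ by $\ln(\pi L_x/\lambda)$ on the edges of length $L_y$ and by $\ln(\pi L_y/\lambda)$ on the edges of length $L_x$.

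What you add is worth keeping, and your sandwich step is in fact rigorous, contrary to your closing remark. Take $g_\delta^{\pm}$ smooth, vanishing near $0$ and equal to $1$ near $1$. Since $\mathrm{tr}\,g(T)=\sum_n g(\mu_n)$ is monotone in $g$, and $g_\delta^{\pm}(1)=1$ leaves the volume term unchanged, the $\limsup$ and $\liminf$ as $\alpha\to\infty$ of $\bigl(N_{\text{eDoF}}^{(2D)}-\pi A_{\text{array}}/\lambda^2\bigr)/(\alpha\ln\alpha)$ are trapped between the boundary coefficients for $g_\delta^-$ and $g_\delta^+$. Only afterwards do you let $\delta\to0$, by dominated convergence since $\epsilon>0$ is fixed; no uniformity in $\delta$ is needed. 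So, for fixed $\epsilon$ and isotropic scaling $\alpha\omega$, the two-term law for the step function follows from Sobolev's smooth-test-function theorem on piecewise smooth domains. This removes the obstacle the paper declares open.

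Your strip heuristic for Step 3 is a direct integral over $k_y$, with density $L_y/(2\pi)$, of 1D Landau--Widom problems of length $L_x$ and band $\sqrt{\kappa^2-k_y^2}$. It reproduces the paper's $\text{Edge}_y$ exactly, because $\frac{L_y}{2\pi}\cdot2\kappa\cdot\frac{1}{\pi^2}=\frac{2L_y}{\pi^2\lambda}$. It also gives a concrete reason for the transverse logarithm, where the paper only asserts it. It is exact only in the infinite-strip limit, and carrying it over to the other pair of edges is the same symmetry assumption the paper makes.

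There are two slips.

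\textbf{Remainder order.} With the boundary integral taken over the physical $\partial\omega$, the boundary term is of order $\alpha\ln\alpha$, so your remainder must read $o(\alpha\ln\alpha)$, not $o(\ln\alpha)$. This is not cosmetic. At fixed aspect ratio, swapping $\ln\alpha$ for $\ln c_x$ or $\ln c_y$ changes the count by only $O(\alpha)$, which lies inside that remainder. So the anisotropic logarithms can be neither derived nor refuted from the Widom asymptotics. That step, not the discontinuity of the test function, is what makes the result semi-analytical, in your proof as in the paper's.

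\textbf{Step 1.} No regularization or calibration is needed. Your split produces two divergent integrals, but $g(t)-t\,g(1)$ equals $1-t$ on $[\epsilon,1]$ and $-t$ on $[0,\epsilon)$. Hence $\int_0^1\frac{g(t)-t\,g(1)}{t(1-t)}\,dt=\int_\epsilon^1\frac{dt}{t}-\int_0^\epsilon\frac{dt}{1-t}=\ln\frac{1-\epsilon}{\epsilon}$ converges directly. With the $\frac{1}{4\pi^2}$ already in your display, this gives the stated coefficient. Calibrating against the paper's 1D eDoF formula would in any case be circular, since the paper derives it from the same expansion; Landau--Widom is the independent check.
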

The proof proceeds in four key steps: invoking Widom's conjecture~\cite{Widom1982}, establishing the operator trace model, evaluating the dominant volume term, and decoupling the anisotropic boundary integrals for a rectangular aperture. The detailed procedure is provided in Appendices~\ref{appdix:Widom_conjecture} and~\ref{sec:2D_proof}.

\begin{remark}
	Equation (\ref{eq:N_rank_2D}) physically decouples the eDoF into two distinct terms. The first term, $\frac{\pi A_{\text{array}}}{\lambda^2}$, represents the classical area-law DoF proportional to the normalized area, neglecting boundary effects. The second, logarithmic term captures the finite-size truncation effect at the rectangular boundaries. Notably, this edge correction is cross-coupled: the normalized length of a boundary (e.g., $L_y/\lambda$) determines the magnitude of the edge effect, while the spatial truncation along its orthogonal axis dictates the logarithmic space-bandwidth penalty (e.g., $\ln(\pi L_x/\lambda)$).
\end{remark}

To evaluate the accuracy of $N_{\text{eDoF}}^{(2D)}$, we compare it against the numerical results obtained via the NGLQ method in Figs.~\ref{Fig:Eigvalue_Distri_Compar} and~\ref{Fig:Eigvalue_Distri_Compar_Diff_Lx_Ly}.
\begin{figure}[htbp]
	\centerline{\includegraphics[scale=0.39]{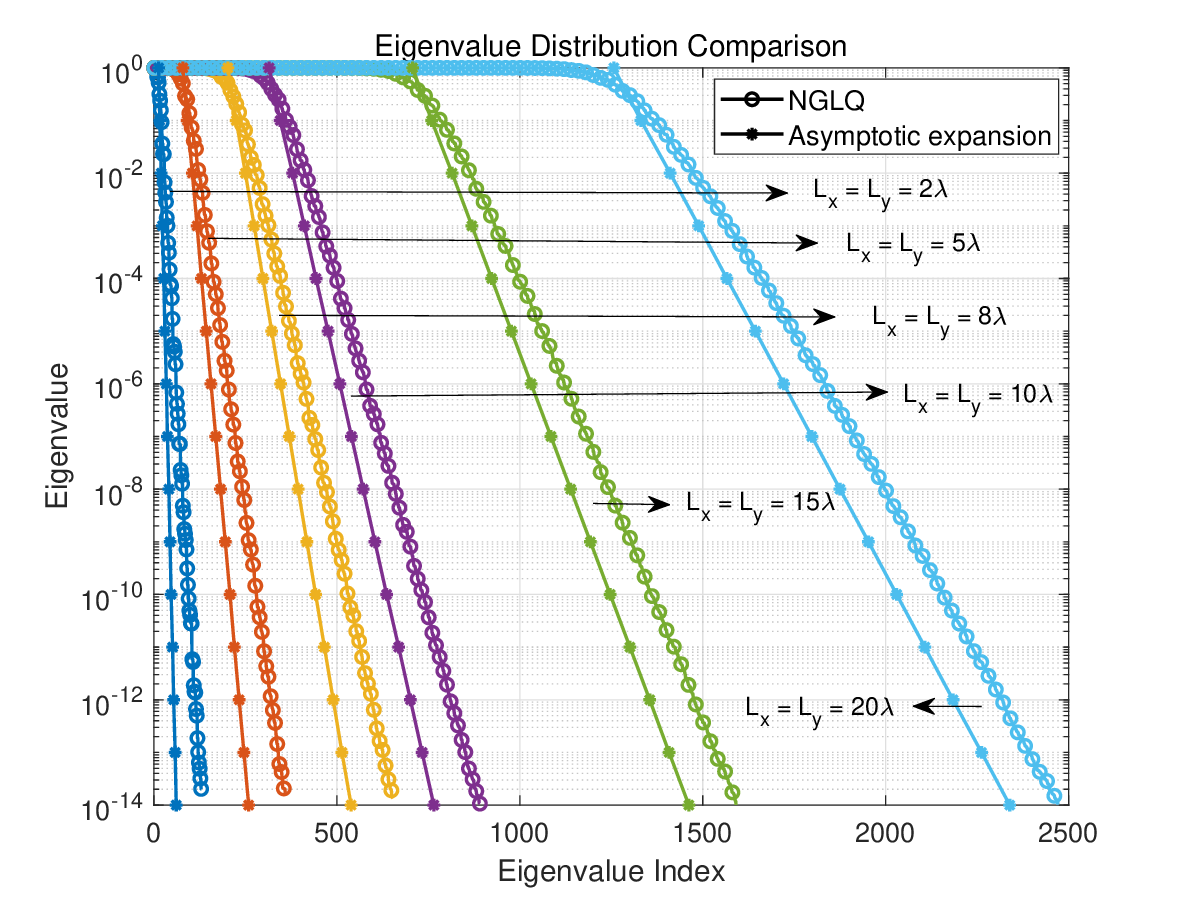}}
	\caption{Eigenvalue distribution comparison between the NGLQ results and the asymptotic expansion of eDoF for different rectangular aperture sizes.}
	\label{Fig:Eigvalue_Distri_Compar}
\end{figure}
\begin{figure}[htbp]
	\vspace{-0.5cm}
	\centerline{\includegraphics[scale=0.38]{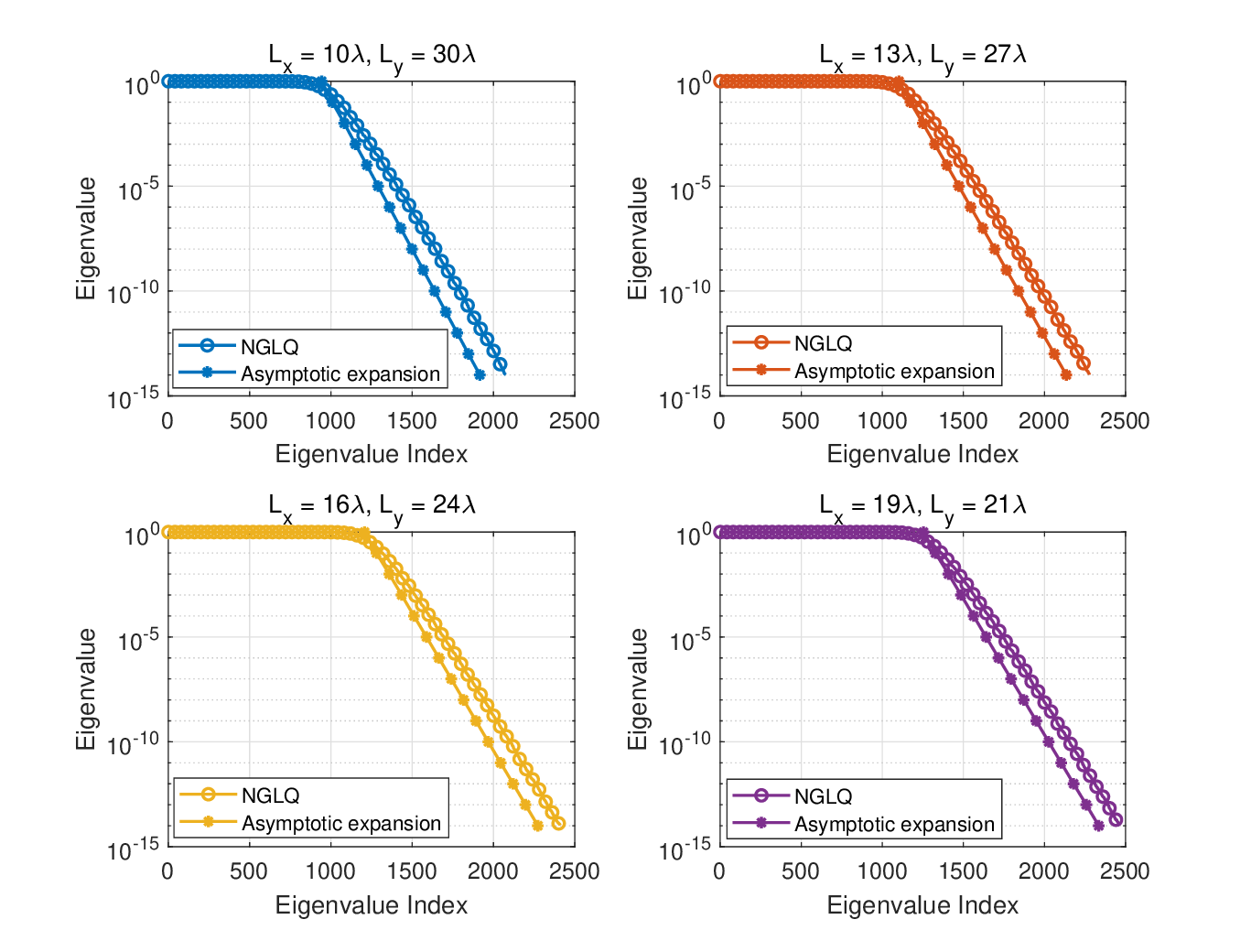}}
	\caption{Eigenvalue distribution comparison between the NGLQ results and the asymptotic expansion of eDoF for different normalized $L_x$ and $L_y$ values.}
	\vspace{-0.5cm}
	\label{Fig:Eigvalue_Distri_Compar_Diff_Lx_Ly}
\end{figure}
Specifically, Fig.~\ref{Fig:Eigvalue_Distri_Compar} illustrates the approximation accuracy for square apertures ($L_x = L_y$) of varying sizes. We denote the eDoF obtained by NGLQ as $N_{\text{eDoF}}^{\text{NGLQ}}(\epsilon)$, and define the relative gap as $\text{RGAP}(\epsilon) = 1- N_{\text{eDoF}}^{\text{NGLQ}}(\epsilon)/N_{\text{eDoF}}^{(2D)}(\epsilon)$. Given a fixed eigenvalue threshold (e.g., $10^{-4}$), the absolute gap increases as the aperture size increases, but the relative gap $\text{RGAP}(\epsilon)$ decreases. We note that $\text{RGAP}(10^{-4})$ drops from 37.2\% at $L_x = L_y = 2\lambda$ to just 5.5\% at $L_x = L_y = 20\lambda$. This convergence aligns with our analytical remainder analysis: when normalized by the dominant volume term, the relative gap scales as $o(\frac{\ln c}{c})$, where $c = \pi L_x/\lambda$.

Furthermore, Fig.~\ref{Fig:Eigvalue_Distri_Compar_Diff_Lx_Ly} investigates the approximation accuracy for rectangular apertures with a fixed perimeter but varying aspect ratios. It is observed that $\text{RGAP}(\epsilon)$ remains at a comparable level across different configurations, which validates the robustness of the approximation in (\ref{eq:N_rank_2D}). 
Notably, the $\text{RGAP}(\epsilon)$ for $L_x = 10\lambda, L_y = 30\lambda$ is larger than that of $L_x = 19\lambda, L_y = 21 \lambda$, which indicates that $\text{RGAP}(\epsilon)$ is mainly governed by $\min(L_x/\lambda, L_y/\lambda)$.

\subsection{Numerically Stable Regime}
Based on the preceding analysis, we propose a numerically stable regime for low-complexity channel reconstruction.
This regime involves three key parameters: the total number of GLQ nodes $M$ ($M = M_x$ for 1D linear and $M = M_xM_y$ for 2D rectangular apertures), the eDoF $N_{\text{eDoF}}(\epsilon)$ approximated for a given numerical threshold, and the actual truncated number of eigenmodes $N_{\text{trunc}}$.

For large normalized arrays, a full EVD of the high-dimensional kernel matrix is computationally expensive due to its $\mathcal{O}(M^3)$ complexity. While partial eigenvalue solvers like the Lanczos algorithm~\cite{lanczos1950iteration} are more efficient, they require the target number of eigenvalues, $k$, as an \textit{a priori} input. 
Since the traditional physical DoF only provides a single fixed geometric value (e.g., $\pi L_xL_y/\lambda^2$) and cannot determine $k$ for a specific numerical threshold $\epsilon$, we can rely on the eDoF $N_{\text{eDoF}}^{(2D)}(\epsilon)$ for an accurate asymptotic approximation. For a large normalized antenna aperture and a given threshold $\epsilon$, we can directly set $N_{\text{eDoF}}^{(2D)}(\epsilon)$ as the input for the Lanczos algorithm ($k \approx N_{\text{eDoF}}^{(2D)}(\epsilon)$). Consequently, the computational complexity can be reduced to roughly $\mathcal{O}(M^2 k)$. It is important to note that $N_{\text{eDoF}}(\epsilon)$ is accurate for high-dimensional kernel matrices (i.e., large normalized apertures), whereas it may exhibit deviations in low-dimensional regimes.

During the EVD operation in (\ref{eq:Nystrom_EVD_Symmetric}), we compute the first $N_{\text{eDoF}}(\epsilon)$ eigenmodes. From this computed subset, $N_{\text{trunc}}$ is determined by selecting only the valid eigenmodes whose eigenvalues exceed a preset machine precision threshold (e.g., $10^{-14}$ or smaller) to prevent numerical instability. These $N_{\text{trunc}}$ eigenmodes will then be used for channel reconstruction. 

\section{Evaluation of Non-isotropic Scattering}
Unlike the isotropic case, the spatial correlation kernel under non-isotropic scattering generally lacks a tractable closed-form expression. To practically characterize this scenario, we adopt the widely utilized von Mises-Fisher (vMF) mixture model~\cite{Pizzo2022Spatial, Wang2022Electromagnetic,Yan2026Spectrally} to describe the normalized squared spectral factor, $\bar{A}_h^2(\theta, \phi)$, on the unit sphere with $\bar{k}_x = \sin\theta \cos\phi$, $\bar{k}_y = \sin\theta\sin\phi$, and $\bar{k}_z = \cos\theta$. Assuming $N_c$ scattering clusters, it is expressed as $\bar{A}_h^2(\theta, \phi) = \sum_{j=1}^{N_c} v_j p_j(\theta, \phi)$,
where $v_j > 0$ are normalization weights satisfying $\sum_{j=1}^{N_c} v_j = 1$. The vMF distribution, $p_j(\theta, \phi)$, for each cluster is defined as:
\begin{align}
	&p_j(\theta, \phi) = \frac{\rho_j}{4\pi\sinh\rho_j}\times \nonumber \\
	& \exp\left\{\rho_j\left[\sin\theta\sin\mu_{\theta, j}\cos(\phi - \mu_{\phi, j}) + \cos\theta\cos\mu_{\theta, j}\right]\right\},
	\label{eq:p_theta_phi}
\end{align}
where $\mu_{\theta, j}$ and $\mu_{\phi, j}$ represent the modal cluster directions, and $\rho_j$ is the concentration parameter controlling the angular power spread. By projecting this 3D wavenumber spectrum onto the 2D planar disk and omitting negligible backscattering (i.e., restricting the propagation to the upper hemisphere $\theta \in [0, \pi/2)$)~\cite{Pizzo2022Spatial}, the spatial correlation kernel $K(\bar{x}, \bar{y})$ can be directly formulated as the inverse Fourier transform over the wavenumber disk:
\begin{equation}
	K(\bar{x}, \bar{y}) \propto \iint_{k_x^2 + k_y^2 \leq \kappa^2}\frac{\bar{A}_{h}^2(\frac{k_x}{\kappa}, \frac{k_y}{\kappa})}{\sqrt{\kappa^2 - k_x^2 - k_y^2}}e^{j(\bar{x}k_x + \bar{y}k_y)}\,dk_x\,dk_y,
	\label{eq:K_bar_x_bar_y}
\end{equation}
where $\bar{A}_h^2(\frac{k_x}{\kappa}, \frac{k_y}{\kappa}) = \bar{A}_h^2(\theta, \phi)$ for $\theta \in [0, \pi/2)$, representing wave propagation originating from scatterers located in front of the receiver.

Although the Jacobian term in (\ref{eq:K_bar_x_bar_y}) introduces a boundary singularity at the disk edge $k_x^2 + k_y^2 = \kappa^2$, the integrand remains strictly integrable, ensuring that the autocorrelation function is $C^{\infty}$ smooth. To apply the NGLQ method, we must evaluate this continuous kernel to construct the $M \times M$ discrete kernel matrix $\mathbf{K}$ across all pairwise spatial differences ($\bar{x} = x_m - x_{m'}$, $\bar{y} = y_n - y_{n'}$) between the $M$ planar grid nodes. Since a closed-form solution for this integral is unavailable under arbitrary vMF non-isotropic conditions, we resort to numerical integration over the 2D wavenumber domain to compute each matrix element.

Conventionally, evaluating the continuous spatial kernel relies on the IDFT with extensive zero-padding, followed by cubic interpolation~\cite{Yan2026Spectrally}. However, this approach introduces interpolation error floors, which compromise the super-exponential convergence of the NGLQ method. Therefore, we adopt the exact NUDFT for the $M^2$ spatial evaluation pairs over an $N_f \times N_f$ discretized wavenumber grid to eliminate interpolation errors. 
Although the NUDFT incurs a high computational time complexity of $\mathcal{O}(M^2N_f^2)$, it can be evaluated pointwise for each spatial pair. Consequently, its required space complexity is reduced to $\mathcal{O}(N_f^2)$. In contrast, implementing the IDFT via the IFFT algorithm imposes a space complexity of $\mathcal{O}(L^2N_f^2)$, where $L$ is the zero-padding factor. Scaling $L$ sufficiently to suppress interpolation errors down to the machine precision level would render the conventional approach prohibitively memory-intensive.

\subsection{Simulation Results}
\begin{figure}[htbp]
	\vspace{-0.3cm}
	\centerline{\includegraphics[scale=0.35]{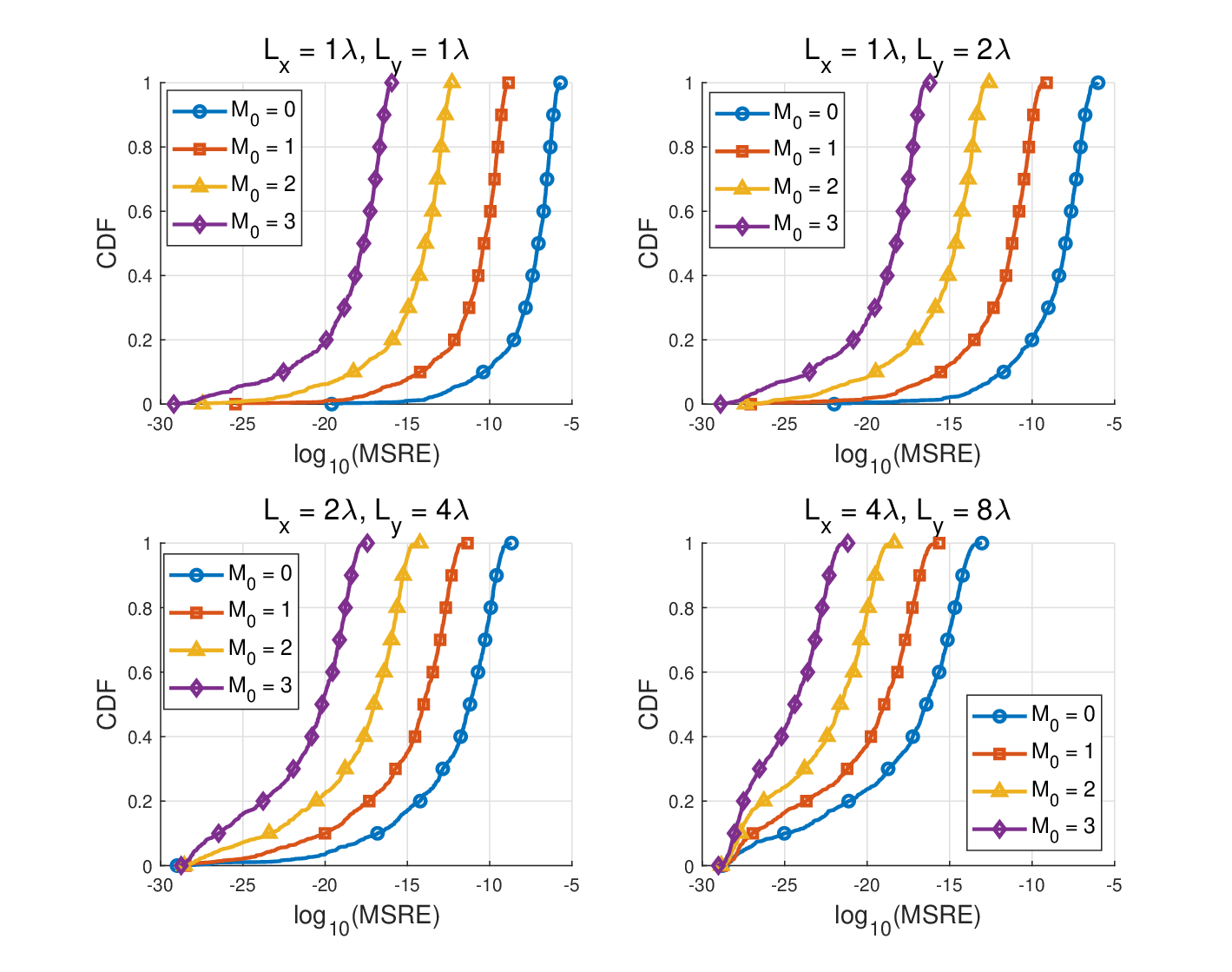}}
	\caption{CDF of NGLQ based MSRE under non-isotropic scattering for various normalized aperture sizes. The spatial kernel is constructed via NUDFT.}
	\label{Fig:CDF_MSRE_NGLQ_Non-iso_NUDFT_Diff_Lx_Ly}
\end{figure}
In Fig.~\ref{Fig:CDF_MSRE_NGLQ_Non-iso_NUDFT_Diff_Lx_Ly}, we show the CDF of the mean squared reconstruction error (MSRE) between the exact spatial kernel (evaluated via NUDFT) and the reconstructed kernel via the NGLQ method. Note that the discrete kernel matrix $\mathbf{K}$ is also calculated using NUDFT. Each curve in Fig.~\ref{Fig:CDF_MSRE_NGLQ_Non-iso_NUDFT_Diff_Lx_Ly} is evaluated over 1000 channel realizations or equivalently $\bar{A}_h^2(k_x/\kappa, k_y/\kappa)$. To simulate $\bar{A}_h^2(k_x/\kappa, k_y/\kappa)$ using the vMF model, the number of clusters, $N_c$, is drawn from a discrete uniform distribution over $\{1, \ldots, 10\}$. The modal angles $\mu_{\theta, j}$ and $\mu_{\phi, j}$ are uniformly distributed over $[0, \pi/2)$ and $[0, 2\pi)$, respectively. The concentration parameter $\rho_j$ is uniformly distributed over $[1, 100]$, and each cluster weight, $v_j$, is initially sampled from a uniform distribution over $[0, 1]$ and subsequently normalized such that $\sum_{j=1}^{N_c}v_j = 1$. 

For the setup in Fig.~\ref{Fig:CDF_MSRE_NGLQ_Non-iso_NUDFT_Diff_Lx_Ly}, we set $M_x = \left\lceil \frac{\pi L_x}{\lambda}\right\rceil + M_0$ and $M_y = \left\lceil\frac{\pi L_y}{\lambda}\right\rceil + M_0$. 
For kernel reconstruction, we eliminate interpolation errors using the NUDFT and retain only the eigenfunctions whose eigenvalues exceed $10^{-14}$. Under these conditions, the CDF of the MSRE exhibits spectral convergence as $M_0$ increases, and can reach the machine precision level (i.e., an amplitude error of $10^{-14}$, which corresponds to an MSRE of $10^{-28}$).
It should be noted that $\pi L_x/\lambda$ and $\pi L_y/\lambda$ are the convergence thresholds for the maximum spatial frequency component $2\pi/\lambda$ along the $x$- and $y$-axes, respectively. For non-isotropic scattering, most of the frequency components are smaller than $2\pi/\lambda$. Consequently, their convergence thresholds are smaller than $\pi L_x/\lambda$ and $\pi L_y/\lambda$. This explains why small MSRE values are observed even when $M_0 = 0$. For the extreme scenario (i.e., the end-fire scenario) shown in Section \ref{sec:Tightness of the Upper Bound}, it is observed that the MSRE enters the super-exponential decay region when $M_x > \pi L_x/\lambda$ and $M_y > \pi L_y/\lambda$.

\section{Conclusion}
\label{sec:conclusion}
In this paper, we have rigorously analyzed a spectral-order reconstruction method based on NGLQ to address the discretization challenges in continuous HMIMO channel modeling. We established the theoretical threshold for the cDoF associated with GLQ nodes and proved that exceeding this threshold drives the quadrature error into a super-exponential decay regime. Furthermore, by invoking the Szeg\H{o}-Widom asymptotic expansion, we provided a theoretical characterization of the eigenvalue distribution for 2D rectangular apertures, yielding a semi-analytical approximation for the eDoF. By unifying the concepts of cDoF and eDoF, we formulated a numerically stable regime that guarantees both high accuracy and low complexity for the discrete modeling of HMIMO channels. A direction for future work is the extension of this theoretical framework to near-field communication scenarios. Near-field propagation is characterized by spherical waves, presenting a serious theoretical challenge as the channel spatial stationarity is lost~\cite{Sun25CM}. Nevertheless, it is important to note that, provided that the non-radiative evanescent waves are neglected, the wavenumber domain of the near-field channel remains band-limited to the propagating disk. This underlying physical constraint implies that the $C^{\infty}$ smoothness of the continuous spatial kernel should be preserved, a crucial ingredient for spectral convergence. 

\appendices
\section{}
\label{appendix:super_exponential_threshold}
To rigorously analyze the convergence of $E_M(f)$, we first note that an $M$-point GLQ achieves an algebraic precision of degree $2M-1$. 
As will be established in the subsequent Lemma~\ref{lemma:frequency_doubling}, the standardized integrand $f(t)$ is an entire function, which guarantees it is infinitely differentiable ($C^{\infty}$) over the real axis. 
According to classical numerical integration theory~\cite{davis2007methods, hildebrand1987introduction}, the absolute truncation error for such a sufficiently smooth function on $[-1, 1]$ is strictly bounded by its $2M$-th derivative: 
\begin{equation}
	|E_M(f)|\leq \frac{2^{2M+1}(M!)^4}{(2M+1)[(2M)!]^3}\max_{\xi\in[-1,1]}|f^{(2M)}(\xi)|.
	\label{eq:E_N_f_upper_bound}
\end{equation}
While (\ref{eq:E_N_f_upper_bound}) provides an exact theoretical bound, its factorial structure obscures the explicit behavior with respect to $M$. 
To reveal the decay rate, we apply Robbins's double inequalities~\cite{robbins1955remark}: $\sqrt{2\pi M}\left(\frac{M}{e}\right)^M e^{\frac{1}{12M + 1}} < M! < \sqrt{2\pi M}\left(\frac{M}{e}\right)^M e^{\frac{1}{12M}}$.
By upper-bounding the numerator and lower-bounding the denominator in (\ref{eq:E_N_f_upper_bound}) using Robbins's bounds, we obtain the following strict analytical upper bound for $|E_M(f)|$:
\begin{equation}
	|E_M(f)| \leq C(M)\Big(\frac{e}{4M}\Big)^{2M}\max_{\xi\in[-1,1]}|f^{(2M)}(\xi)|,
	\label{eq:E_N_f_4}
\end{equation}
where $C(M) = \frac{\sqrt{\pi M}}{2M+1}\exp\left(\frac{1}{3M} - \frac{3}{24M+1}\right)$ is an algebraically decaying pre-factor, which is dominated by the super-exponential function $\big(\frac{e}{4M}\big)^{2M}$, and $e$ is Euler's number.

\subsection{Upper Bound of $f^{(2M)}(\xi)$}
We now consider the integrand $f(t)$. 
We first show that the band-limited kernel $K(x, x')$ and its eigenfunction $g(x)$ consist of continuous linear combinations of complex exponential functions.
\begin{lemma}
	\label{lemma:band_limit}
	For a spatially stationary 1D HMIMO channel operating under a maximum spatial wavenumber $\kappa = 2\pi/\lambda$, both the continuous kernel and its corresponding eigenfunctions, when mapped onto the standard interval $s, t\in [-1,1]$ as $\bar{K}(s,t)$ and $\bar{g}(t)$, are entirely composed of continuous linear combinations of complex exponential functions $e^{j \omega t}$, and their maximum angular frequency is bounded by $\omega_{\max} = \pi L/\lambda$.
\end{lemma}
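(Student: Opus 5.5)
The plan is to work from the plane-wave representation in (\ref{eq:h_pm_x_y_z}), restricted to the line (say the $x$-axis, with $y=z=0$). First I would express the kernel in stationary form. For a spatially stationary channel, $K(x,x') = \mathbb{E}[h(x)h^*(x')]$ depends only on $x-x'$. Using the independence and whiteness of $W^{\pm}$, the expectation of the product of the two integrals in (\ref{eq:h_pm_x_y_z}) collapses to a single integral over the disk. Integrating out $k_y$ then gives a one-dimensional spectral representation
\begin{equation*}
K(x,x') = \int_{-\kappa}^{\kappa} S(k_x)\, e^{j k_x (x-x')}\,dk_x ,
\end{equation*}
where $S$ is a non-negative, integrable power spectral density. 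Its support lies in $[-\kappa,\kappa]$ because the propagating disk $k_x^2+k_y^2\le\kappa^2$ projects onto that interval. The end-fire case (\ref{eq:S_k_end_fire}) shows that $S$ may be a measure, so I would work with a finite spectral measure throughout.

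Next I would apply the affine maps $x=\frac{L}{2}s+\frac{L}{2}$ and $x'=\frac{L}{2}t+\frac{L}{2}$. This gives $k_x(x-x') = \frac{k_xL}{2}(s-t)$, so with $\omega = k_xL/2$ we obtain
\begin{equation*}
\bar K(s,t)=\int_{-\omega_{\max}}^{\omega_{\max}} \bar S(\omega)\,e^{j\omega s}e^{-j\omega t}\,d\omega .
\end{equation*}
Here $\omega_{\max}=\kappa L/2=\pi L/\lambda$. This already proves the claim for the kernel: in $t$, for each fixed $s$, $\bar K$ is a continuous superposition of $e^{-j\omega t}$ with $|\omega|\le\omega_{\max}$.

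For the eigenfunctions, I would use the KLD equation (\ref{eq:KLD}) in mapped form with $\mu_n>0$:
\begin{equation*}
\bar g_n(s)=\frac{L}{2\mu_n}\int_{-1}^{1}\bar K(s,t)\bar g_n(t)\,dt .
\end{equation*}
Substituting the spectral form of $\bar K$ and exchanging the order of integration gives $\bar g_n(s)=\int \bar S(\omega)\,\hat c_n(\omega)\,e^{j\omega s}\,d\omega$, with $\hat c_n(\omega)=\frac{L}{2\mu_n}\int_{-1}^1 \bar g_n(t)e^{-j\omega t}dt$. This is again a superposition of exponentials with frequencies in $[-\omega_{\max},\omega_{\max}]$, and the representation extends $\bar g_n$ to the whole real line (indeed to an entire function). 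Modes with $\mu_n=0$ do not enter the Mercer expansion (\ref{eq:Mercer}) or the truncated model (\ref{eq:channel_model}), so they can be excluded.

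The step I expect to need the most care is justifying the interchange of integrals (Fubini) and making sense of the measure-valued spectrum. Both follow from three facts:
\begin{itemize}[leftmargin=*, topsep=0pt, itemsep=0pt, parsep=0pt]
\item $\bar S$ is a finite measure, by the unit-variance normalization.
\item $\bar g_n\in L^2[-1,1]\subset L^1[-1,1]$.
\item $|e^{j\omega t}|=1$.
\end{itemize}
Together these make the double integral absolutely convergent, so the interchange is valid and $\hat c_n$ is bounded and continuous.
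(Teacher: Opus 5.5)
Your proposal is correct and follows essentially the same route as the paper: apply the affine map and set $\omega = kL/2$ to get $\omega_{\max}=\pi L/\lambda$ for the kernel, then substitute the spectral form into the mapped KLD equation and use Fubini, so that the eigenfunction is a band-limited superposition with the same maximum frequency. The paper assumes the 1D spectral representation directly. Your extra steps are derivations or rigor that the paper omits: obtaining $S$ from the plane-wave model by integrating out $k_y$, treating $S$ as a finite measure so the end-fire case is covered, justifying the interchange of integrals, and excluding $\mu_n=0$. Your opposite sign convention in the exponentials is immaterial, since the frequency band $[-\omega_{\max},\omega_{\max}]$ is symmetric.
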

\begin{proof}
	Substituting the standard affine transformations $x = \frac{L}{2}s + \frac{L}{2}$ and $x' = \frac{L}{2}t + \frac{L}{2}$ into the continuous kernel, its 1D plane-wave spectral representation on the standard interval becomes:
	\begin{equation}
		\bar{K}(s, t) = \int_{-\kappa}^{\kappa}S(k)e^{-jk\frac{L}{2}s}e^{jk\frac{L}{2}t}\,dk.
		\label{eq:tilde_K_s_t}
	\end{equation}
	By defining the effective spatial frequency $\omega = kL/2$, the boundary is naturally mapped to $\omega_{\max} = \kappa L/2 = \pi L/\lambda$. Since the differential scales as $dk = \frac{2}{L}d\omega$, the mapped kernel can be rewritten as a continuous superposition of $e^{j\omega t}$:
	\begin{equation}
		\bar{K}(s, t) = \int_{-\omega_{\max}}^{\omega_{\max}}W(s, \omega)e^{j\omega t}\,d\omega,
		\label{eq:tilde_K_s_t_2}
	\end{equation}
	where $W(s, \omega) = \frac{2}{L}S\Big(\frac{2\omega}{L}\Big)e^{-j\omega s}$. Thus, the kernel on the standard interval is band-limited with 
	no spatial frequency components higher than the physical bandwidth $\omega_{\max}$.
	
	To analyze the eigenfunctions, we recall the continuous KLD problem: $\frac{2\mu_n}{L} \bar{g}(s) = \int_{-1}^{1} \bar{K}(s, t)\bar{g}(t)dt$. Substituting the spectral decomposition of $\bar{K}(s, t)$ as in (\ref{eq:tilde_K_s_t}) and applying Fubini's theorem to exchange the order of integration yields:
	\begin{equation}
		\frac{2\mu_n}{L}\bar{g}(s) = \int_{-\kappa}^{\kappa}S(k)e^{-jk\frac{L}{2}s}\left[\int_{-1}^{1}\bar{g}(t)e^{jk\frac{L}{2}t}\,dt\right]\,dk.
	\end{equation}
	Recognizing the inner integral as the finite Fourier transform of $\bar{g}(t)$, denoted as $\phi(k) = \int_{-1}^{1}\bar{g}(t)e^{j k\frac{L}{2}t}\,dt$, and using the frequency mapping $\omega = k L/2$, the eigenfunction can similarly be expressed as a continuous superposition of $e^{-j\omega s}$:
	\begin{equation}
		\bar{g}(s) = \int_{-\omega_{\max}}^{\omega_{\max}}C(\omega)e^{-j\omega s}\,d\omega,
		\label{eq:tilde_g_s}
	\end{equation}
	where $C(\omega) = \frac{1}{\mu_n}S\left(\frac{2\omega}{L}\right)\phi\left(\frac{2\omega}{L}\right)$. This structural form demonstrates that the eigenfunctions are bounded by the exact same maximum spatial frequency as the kernel.
\end{proof}

Based on the bandlimited properties established in Lemma~\ref{lemma:band_limit}, we now characterize the growth and boundedness of the standardized integrand $f(t)$ to lay the groundwork for the high-order derivative analysis.
\begin{lemma}
	\label{lemma:frequency_doubling}
	The integrand $f(t) = \bar{K}(s, t) \bar{g}(t)$ is an entire function of exponential type $2\pi L/\lambda$.
\end{lemma}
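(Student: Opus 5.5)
The plan is to combine the two spectral representations from Lemma~\ref{lemma:band_limit} and read off the exponential type from the product. By (\ref{eq:tilde_K_s_t_2}) and (\ref{eq:tilde_g_s}), for fixed $s$ we have $\bar{K}(s,t)=\int_{-\omega_{\max}}^{\omega_{\max}}W(s,\omega)e^{j\omega t}\,d\omega$ and $\bar{g}(t)=\int_{-\omega_{\max}}^{\omega_{\max}}C(\omega')e^{-j\omega' t}\,d\omega'$ with $\omega_{\max}=\pi L/\lambda$. The first step is to check that the spectral densities are integrable, i.e.\ $W(s,\cdot),C\in L^1[-\omega_{\max},\omega_{\max}]$. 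This holds because $S$ is a nonnegative PSD with finite total power (unit variance), $|e^{-j\omega s}|=1$, and $|\phi(k)|\le\int_{-1}^1|\bar g|\,dt<\infty$. If $S$ contains Dirac components, as in (\ref{eq:S_k_end_fire}), the integrals are read as Stieltjes integrals against finite measures, and every later step goes through unchanged.

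Next, extend $t$ to $z\in\mathbb{C}$. Each representation defines an entire function: differentiation under the integral sign is justified by dominated convergence on a compact frequency interval, or equivalently Morera's theorem applies. Each also obeys the growth bound $|\bar K(s,z)|\le \|W(s,\cdot)\|_{1}\,e^{\omega_{\max}|\operatorname{Im} z|}$, and similarly $|\bar g(z)|\le\|C\|_1 e^{\omega_{\max}|\operatorname{Im}z|}$. So both factors are entire of exponential type at most $\omega_{\max}$.

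The product $f(z)=\bar K(s,z)\bar g(z)$ is then entire, and $|f(z)|\le \|W\|_1\|C\|_1 e^{2\omega_{\max}|\operatorname{Im}z|}\le \|W\|_1\|C\|_1 e^{2\omega_{\max}|z|}$, so its type is at most $2\omega_{\max}=2\pi L/\lambda$. To make the frequency doubling explicit, and to prepare for the Bernstein step, I would also write $f$ as a double integral $\iint W(s,\omega)C(\omega')e^{j(\omega-\omega')t}\,d\omega\,d\omega'$ using Fubini. The change of variables $\nu=\omega-\omega'$ then shows that $f$ is the Fourier transform of the convolution $W*\tilde C$, which is supported in $[-2\omega_{\max},2\omega_{\max}]$. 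By the Paley--Wiener theorem this again gives exponential type $2\pi L/\lambda$. In addition, $f$ is bounded on $\mathbb{R}$, which is what the later Bernstein inequality in Theorem~\ref{theorem:super_exponential_threshold} requires.

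The main obstacle is the regularity of the spectral objects, not the algebra. We need integrability of $C(\omega)=\mu_n^{-1}S(2\omega/L)\phi(2\omega/L)$ and of $W$ when $S$ is a general measure, possibly with singular parts or the $(\cdot)^{-1/2}$-type edge singularities. I would handle this by working with finite Borel measures throughout: $|\phi|$ is uniformly bounded and $\mu_n>0$, so the measure $C(\omega)\,d\omega$ has finite total variation. The convolution-support argument and the growth estimate both hold verbatim for finite measures with compact support.
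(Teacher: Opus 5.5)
Your proposal is correct and takes essentially the paper's route. By Lemma~\ref{lemma:band_limit}, both $\bar K(s,\cdot)$ and $\bar g$ are band-limited to $[-\omega_{\max},\omega_{\max}]$, so each continues to an entire function of exponential type $\omega_{\max}$, and multiplying the two growth bounds gives type $2\omega_{\max}=2\pi L/\lambda$.

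Your version is somewhat tighter than the paper's. Working with $L^1$ densities or finite measures and the bound $e^{\omega_{\max}|\operatorname{Im} z|}$ covers spectra that the paper's $L^2$ Paley--Wiener citation does not literally cover, such as the Dirac end-fire spectrum. It also gives boundedness on the real axis directly, which the later Bernstein step needs. The convolution-support argument is valid but redundant.
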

\begin{proof}
According to the Paley-Wiener theorem~\cite{rudin1974real}, if a function $H(x)$ is supported in $[-\kappa, \kappa]$ such that $H \in L^2(-\kappa, \kappa)$, its holomorphic Fourier transform
\begin{equation}
	h(z) = \int_{-\kappa}^{\kappa}H(x)e^{jxz}\,dx
\end{equation}
is an entire function of exponential type $\kappa$, meaning there exists a constant $C$ such that $|h(z)| \leq Ce^{\kappa|z|}$ for all $z \in \mathbb{C}$. Based on Lemma~\ref{lemma:band_limit}, both the kernel function $\bar{K}(s, t)$ and the eigenfunction $\bar{g}(t)$ are bandlimited within $[-\omega_{\max}, \omega_{\max}]$, where $\omega_{\max} = \pi L/\lambda$. By analytically continuing $t$ to the complex plane $z \in \mathbb{C}$, both functions are classified as entire functions of exponential type $\omega_{\max} = \pi L/\lambda$. Consequently, they satisfy $|\bar{K}(s, z)| \leq C_K e^{\omega_{\max}|z|}$ and $|\bar{g}(z)| \leq C_g e^{\omega_{\max}|z|}$. Their product thus satisfies $|f(z)| \leq C_f e^{2\omega_{\max}|z|}$ (where $C_f = C_KC_g$), which proves that the integrand $f(t)$ is of exponential type $2\omega_{\max} = 2\pi L/\lambda$.
\end{proof}

To upper bound $f^{(2M)}(\xi)$, we resort to Bernstein's inequality for entire functions~\cite{Rahman2009Bernstein}. This classical theorem states that if an entire function $H(z)$ of exponential type $\tau$ is bounded by a constant $M_0$ on the real axis (i.e., $|H(x)| \leq M_0$ for $-\infty < x < \infty$), its derivative globally satisfies $|H'(x)| \leq M_0\tau$.

In practical HMIMO systems, the total radiated power is finite. This physical energy constraint guarantees that the kernel $\bar{K}(s, t)$ is bounded on the real spatial axis (i.e., $|\bar{K}| \le M_K$). Concurrently, the eigenfunction $\bar{g}(t)$ possesses finite physical energy over the restricted antenna aperture. Through the analytical continuation governed by the Fredholm integral equation, this local energy constraint intrinsically forces the eigenfunction to remain uniformly bounded across the entire real domain as well (i.e., $|\bar{g}| \leq M_g$). Therefore, the combined integrand is uniformly bounded by $|f(t)| \leq M_0 = M_KM_g$ for all $t \in \mathbb{R}$, satisfying the prerequisite for Bernstein's inequality.
	
Leveraging the exponential type property established in Lemma~\ref{lemma:frequency_doubling}, we iteratively apply Bernstein's inequality $2M$ times to establish an upper bound of $\max_{\xi \in [-1,1]}|f^{(2M)}(\xi)|$:
\begin{equation}
	\max_{\xi\in[-1,1]}|f^{(2M)}(\xi)| \leq \sup_{t \in \mathbb{R}}|f^{(2M)}(t)| \leq \left(\frac{2\pi L}{\lambda}\right)^{2M}\sup_{t\in\mathbb{R}}|f(t)|.
	\label{eq:upper_bound_f_2M}
\end{equation}
Substituting (\ref{eq:upper_bound_f_2M}) into (\ref{eq:E_N_f_4}), we obtain the analytical upper bound for $E_{M}(f)$, expressed in terms of the normalized aperture size $\frac{L}{\lambda}$ and the GLQ order $M$ as given in~(\ref{eq:envelope_E_M_f}).

\subsection{Convergence Threshold}
We note that (\ref{eq:envelope_E_M_f}) suggests a super-exponential decay of the quadrature error with respect to the number of GLQ nodes $M$. 
However, it serves only as an upper bound.
In this subsection, by shifting our perspective from the global integral operator to the local interpolation remainder, we show that a tighter convergence threshold can be established.

Assume we interpolate the integrand $f(t)$ over the interval $[-1, 1]$ using $M$ GLQ nodes. According to the fundamental Lagrange interpolation remainder theorem~\cite{hildebrand1987introduction}, the truncation error $E_M(t)$ at any evaluation point $t$ is given by:
\begin{equation}
	E_M(t) = \frac{f^{(M)}(\xi)}{M!}\prod_{m=1}^{M}(t-t_m),
	\label{eq:truncation_error}
\end{equation}
where $\xi \in [-1, 1]$.
Since $f(t)$ is an entire function of exponential type $2\pi L/\lambda$ bounded by $M_0$ on the entire real axis, we can obtain the upper bound for its $M$-th derivative via Bernstein's inequality, yielding $\sup_{\xi \in \mathbb{R}}|f^{(M)}(\xi)| \leq M_0 (2\pi L/\lambda)^M$. 
Furthermore, since the NGLQ method employs the roots of the Legendre polynomial as quadrature nodes, the nodal polynomial corresponds to the monic Legendre polynomial. Its maximum modulus over $[-1, 1]$ is attained at the boundaries, given by~\cite{hildebrand1987introduction}:
\begin{equation}
	\max_{t \in [-1,1]}\bigg|\prod_{m=1}^M(t-t_m)\bigg| = \frac{2^M(M!)^2}{(2M)!}.
\end{equation}
By substituting the above facts into (\ref{eq:truncation_error}), the absolute interpolation error $|E_M(t)|$ is upper bounded by:
\begin{equation}
	|E_M(t)| \leq M_0\frac{2^M M!}{(2M)!} \bigg(\frac{2\pi L}{\lambda}\bigg)^M.
	\label{eq:E_M_t_upper_bound}
\end{equation}
Let $U_M$ denote the upper bound sequence on the right-hand side of (\ref{eq:E_M_t_upper_bound}). By evaluating $U_{M+1}/U_M$, we obtain:
\begin{equation}
	\frac{U_{M+1}}{U_M} = \frac{1}{2M+1}\left(\frac{2\pi L}{\lambda}\right).
	\label{eq:ratio_test}
\end{equation}
Equation (\ref{eq:ratio_test}) reveals that the theoretical error envelope initially diverges for small $M$, initiating a strict monotonic decay towards zero only after the sequence ratio drops below $1$. Consequently, solving $U_{M+1}/U_M < 1$ yields the sufficient computational threshold for the onset of convergence:
\begin{equation}
	M > \frac{\pi L}{\lambda} - 0.5.
\end{equation}
 
\section{}
\label{appendix:2D_convergence}
To bound the first term in (\ref{eq:2D_error_triangle}), we define the marginal function $g(u) = \int_{-1}^{1} f(u, v)\,dv$. Applying the 1D GLQ error bound in (\ref{eq:E_N_f_4}) requires bounding the $2M_x$-th derivative of $g(u)$. By relaxing the integral to its global supremum over $v \in [-1, 1]$, we have:
\begin{align}
	\max_{u \in [-1,1]} |g^{(2M_x)}(u)| & \leq \int_{-1}^{1} \sup_{(u,v) \in \mathbb{R}^2} \left| \frac{\partial^{2M_x}f(u, v)}{\partial u^{2M_x}} \right|\,dv \nonumber \\
	& = 2 \sup_{(u,v)\in\mathbb{R}^2} \left| \frac{\partial^{2M_x}f(u, v)}{\partial u^{2M_x}} \right|.
\end{align}
Substituting this upper bound along with (\ref{eq:partial_u_bound}) into (\ref{eq:E_N_f_4}) yields:
\begin{equation}
	\left|E_{M_x}(g)\right| \leq 2 C(M_x) \left(\frac{e\pi L_x}{2M_x\lambda}\right)^{2M_x} \sup_{(u,v) \in \mathbb{R}^2}|f(u,v)|.
\end{equation}

For the second term in (\ref{eq:2D_error_triangle}), the 1D error $E_{M_y}(\cdot)$ evaluated along the specific $v$-axis at any quadrature node $u_m$ is initially bounded by the 1D sectional supremum $\sup_{v\in\mathbb{R}}|f(u_m, v)|$. To establish a unified global envelope independent of the nodes $\{u_m\}$, we relax this sectional bound to the global supremum over the entire 2D spatial domain, i.e., $\sup_{v\in\mathbb{R}} |f(u_m, v)| \leq \sup_{(u,v)\in\mathbb{R}^2} |f(u, v)|$. Since the Gauss-Legendre weights satisfy $\sum_{m=1}^{M_x} w_m^{(u)} = 2$, substituting this relaxed bound along with (\ref{eq:partial_v_bound}) yields:
\begin{align}
	&\sum_{m=1}^{M_x}w_m^{(u)}\left|E_{M_y}\big(f(u_m, \cdot)\big)\right| \leq \nonumber \\
	&\qquad\qquad\quad  2 C(M_y)\left(\frac{e\pi L_y}{2M_y\lambda}\right)^{2M_y} \sup_{(u,v)\in\mathbb{R}^2}|f(u,v)|.
\end{align}
Combining these two orthogonal bounds recovers the super-exponential upper bounds presented in the theorem, thereby concluding the proof.

\section{}
\label{appdix:Widom_conjecture}
We begin by introducing the multidimensional Szeg\H{o}-Widom asymptotic expansion~\cite{Widom1982}. Let $\Omega \subset \mathbb{R}^d$ and $K \subset \mathbb{R}^d$ be bounded, normalized spatial and wavenumber domains, respectively. To accommodate our 2D scenario (i.e., a rectangular spatial domain and a disk wavenumber domain), we assume the boundary of the spatial domain, $\partial\Omega$, is piecewise-smooth, while the boundary of the wavenumber domain, $\partial K$, is smooth. Note that for the 1D case (i.e., $d = 1$), $\partial\Omega$ and $\partial K$ degenerate into discrete point sets, to which the Szeg\H{o}-Widom asymptotic expansion also applies. Given the above settings, we have the following lemma.
\begin{lemma}
	Let $\Omega \subset \mathbb{R}^d$ and $K \subset \mathbb{R}^d$ be bounded, normalized spatial and wavenumber domains, respectively. Assume that the spatial boundary, $\partial\Omega$, is piecewise-smooth, and the wavenumber boundary, $\partial K$, is smooth. Let $\mathbf{n}_r$ and $\mathbf{n}_k$ denote the outward unit normal vectors on $\partial\Omega$ and $\partial K$, respectively. Consider the scaled spatial domain $\Omega_\alpha = \alpha\Omega$ with a scaling factor $\alpha > 0$. Let $\mathcal{T}_\alpha$ be the self-adjoint integral operator defined over the domains $\Omega_\alpha$ and $K$. Then for a given test function $f(t)$ defined on $[0, 1]$ satisfying $f(0) = 0$, the trace of the operator $f(\mathcal{T}_\alpha)$ admits the following asymptotic expansion as $\alpha \to \infty$:
	\begin{align}
		&\mathrm{tr}(f(\mathcal{T}_\alpha)) = \alpha^d \frac{f(1)}{(2\pi)^d}\int_{\Omega}\int_{K}\,d\mathbf{k} \,d \mathbf{r} + \nonumber \\
		& \alpha^{d-1}\ln\alpha\frac{\mathcal{U}(f)}{(2\pi)^{d-1}}\oint_{\partial\Omega}\oint_{\partial K}|\mathbf{n}_r\cdot \mathbf{n}_k|dS_k dS_r + o(\alpha^{d-1} \ln \alpha),
		\label{eq:Widom_Conjecture}
	\end{align}
	where $\mathcal{U}(f)$ is a functional depending on the spectral distribution of the test function, given by:
	\begin{equation}
		\mathcal{U}(f) = \frac{1}{4\pi^2}\int_{0}^{1}\frac{f(t) - tf(1)}{t(1-t)}\,dt.
		\label{eq:U_functional}
	\end{equation}
\end{lemma}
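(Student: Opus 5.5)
The plan is not to re-derive the Szeg\H{o}-Widom expansion from scratch. Instead, I would identify $\mathcal{T}_\alpha$ with the truncated Wiener-Hopf (space-bandwidth limiting) operator $\mathcal{T}_\alpha = \chi_{\Omega_\alpha}\mathcal{F}^{-1}\chi_K\mathcal{F}\chi_{\Omega_\alpha}$. This operator is self-adjoint, and $0 \le \mathcal{T}_\alpha \le 1$. The argument then has three steps: reduce to Sobolev's proven cases, check the leading and subleading coefficients on explicit polynomial test functions, and extend to the (possibly non-smooth) $f$ in the statement by a monotone sandwich argument.

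First, I would compute $\mathrm{tr}(\mathcal{T}_\alpha)$ and $\mathrm{tr}(\mathcal{T}_\alpha^2)$ directly from the kernel $\frac{1}{(2\pi)^d}\int_K e^{j\mathbf{k}\cdot(\mathbf{r}-\mathbf{r}')}d\mathbf{k}$.
\begin{itemize}
\item The trace of $\mathcal{T}_\alpha$ gives exactly the volume term $\alpha^d|\Omega||K|/(2\pi)^d$. This is consistent with (\ref{eq:Widom_Conjecture}) for $f(t)=t$, since then $f(t)-tf(1)\equiv 0$ and $\mathcal{U}(f)=0$.
\item The quantity $\mathrm{tr}(\mathcal{T}_\alpha-\mathcal{T}_\alpha^2)$ is the classical ``boundary'' quantity. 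Its $\alpha^{d-1}\ln\alpha$ coefficient is the Landau-Widom/Widom double boundary integral $\oint\oint|\mathbf{n}_r\cdot\mathbf{n}_k|$, with constant $1/(4\pi^2)\cdot(2\pi)^{-(d-1)}$. This matches $\mathcal{U}(t^2)=\frac{1}{4\pi^2}\int_0^1 \frac{t^2-t}{t(1-t)}dt=-\frac{1}{4\pi^2}$.
\end{itemize}
These two computations fix the normalization in (\ref{eq:U_functional}).

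Second, for general smooth $f$ with $f(0)=0$, I would invoke Sobolev's theorem~\cite{sobolev2013pseudo} together with its piecewise-smooth-domain extension~\cite{Sobolev2015Wiener}. That extension allows $\partial\Omega$ to have corners, as for a rectangle, while $\partial K$ is a smooth circle. I would then invoke the 2017 extension~\cite{Sobolev2017Functions} to reach continuous $f$ that are H\"older near $t=0$ and $t=1$. In the 1D case the boundaries are point sets, and the result reduces to the Landau-Widom theorem~\cite{landau1980eigenvalue}, which already allows discontinuous $f$.

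Third, to reach a general $f$, in particular a step function $\mathbf{1}_{\{t>\tau\}}$ as needed for eigenvalue counting, I would exploit monotonicity of the trace under the functional calculus. If $f_-\le f\le f_+$ on $[0,1]$, then $\mathrm{tr} f_-(\mathcal{T}_\alpha)\le \mathrm{tr} f(\mathcal{T}_\alpha)\le \mathrm{tr} f_+(\mathcal{T}_\alpha)$. I would choose continuous piecewise-linear $f_\pm$ that agree with $f$ outside a $\delta$-neighbourhood of the jump and also vanish at $0$. Applying the second step to $f_\pm$ and dividing by $\alpha^{d-1}\ln\alpha$ gives $\limsup$ and $\liminf$ bounds involving $\mathcal{U}(f_\pm)$. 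Dominated convergence in (\ref{eq:U_functional}) then shows $\mathcal{U}(f_\pm)\to\mathcal{U}(f)$ as $\delta\to0$. Two facts are needed for this: the integrand is bounded near $t=0$ because $f$ vanishes on a neighbourhood of $0$, and near $t=1$ the numerator $f(t)-t$ is $O(1-t)$.

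The main obstacle is this last step, specifically two features of the approximation.
\begin{itemize}
\item The leading $\alpha^d$ term must be identical for $f_\pm$ and $f$. This holds because it depends only on $f(1)$, and $f_\pm(1)=f(1)$.
\item The $o(\alpha^{d-1}\ln\alpha)$ remainders for $f_\pm$ need not be uniform in $\delta$. The sandwich avoids this issue, since $\delta$ is fixed before $\alpha\to\infty$ and only afterwards sent to zero.
\end{itemize}
What I expect to require genuine care is checking that Sobolev's hypotheses cover the corner/disk combination simultaneously with non-smooth $f_\pm$. If the available theorem only covers smooth $f$, I would instead mollify $f_\pm$, keeping the sandwich inequalities, which still closes the argument.
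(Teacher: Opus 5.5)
Your proposal is correct, but it follows a genuinely different route from the paper, because the paper gives no derivation of this lemma. It imports the expansion as Widom's conjecture \cite{Widom1982} and points to Sobolev's proofs for smooth and H\"older-continuous test functions. It then states explicitly that the case used afterwards, a step-function $f$ on a rectangle$\times$disk, ``has not been proven''. That is why the 2D eDoF is called semi-analytical and is only validated numerically.

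You instead reduce to the proven smooth-$f$ theorem for piecewise smooth domains \cite{Sobolev2015Wiener} and close the discontinuous case with a Landau--Widom-style monotone sandwich. The two points that make this work are exactly the ones you isolate:
\begin{itemize}
\item $f_\pm(1)=f(1)$, so the $\alpha^d$ terms cancel exactly;
\item $\delta$ is fixed before $\alpha\to\infty$, after which $|\mathcal{U}(f_\pm)-\mathcal{U}(f)|\le\frac{1}{4\pi^2}\int_{\tau-\delta}^{\tau+\delta}\frac{dt}{t(1-t)}\to0$.
\end{itemize}
Your route therefore gives an actual proof of the eigenvalue-counting case the paper needs. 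It also shows that this case is a short corollary of Sobolev's 2015 result rather than an open problem. The paper's citation supplies the formula and nothing more.

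Three tidy-ups.
\begin{itemize}
\item Take $f_\pm$ smooth from the start, e.g.\ $f_+(t)=\phi((t-\tau+\delta)/\delta)$ and $f_-(t)=\phi((t-\tau)/\delta)$ with $\phi$ a smooth step. Then you need only the smooth-$f$ theorem, not the 2017 extension or a check on kinks.
\item State the class of admissible $f$ explicitly: finitely many jumps in $(0,1)$, and regular near $0$ and $1$. The lemma as literally written, for arbitrary $f$ with $f(0)=0$, is false. For example, $f=\mathbf{1}_{\{1\}}$ gives trace $0$, since no eigenvalue of $\mathcal{T}_\alpha$ equals $1$. Yet the right-hand side has leading term of order $\alpha^d$ and $\mathcal{U}(f)=-\infty$.
\item Your first step is only a normalization check. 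In $d\ge2$ the $\alpha^{d-1}\ln\alpha$ asymptotics of $\mathrm{tr}(\mathcal{T}_\alpha-\mathcal{T}_\alpha^2)$ is not a routine direct computation; it is the quadratic case of the very formula being proved. It is subsumed by your second step anyway.
\end{itemize}
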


We first note that the integral operator $\mathcal{T}_\alpha$ is formulated as 
\begin{equation}
	\mathcal{T}_\alpha = P_{\alpha, \Omega}Q_K P_{\alpha, \Omega},
	\label{eq:Operator_T_alpha}
\end{equation}
where $P_{\alpha, \Omega}$ denotes the spatial projection operator defined as
\begin{equation}
	(P_{\alpha,\Omega} \psi)(\mathbf{r}) = \psi(\mathbf{r}) \text{ for } \mathbf{r} \in \alpha\Omega, \text{ and } 0 \text{ otherwise}.
	\label{eq:P_Omega_alpha}
\end{equation}
Note that $\mathbf{r}$ in (\ref{eq:P_Omega_alpha}) is defined in the physical spatial domain $\alpha\Omega$. However, in (\ref{eq:Widom_Conjecture}), to facilitate the asymptotic analysis as $\alpha \to \infty$, we adopt the normalized domain $\Omega$ via the coordinate transformation $\mathbf{r} \to \alpha\mathbf{r}$. The operator $Q_K$ represents the band-limiting operator, defined as
\begin{equation}
	(Q_K\psi)(\mathbf{r}) = \int_{\mathbb{R}^d}R(\mathbf{r} - \mathbf{r}')\psi(\mathbf{r}')\,d\mathbf{r}',
	\label{eq:Q_K}
\end{equation}
where the kernel function $R(\mathbf{r})$ is the inverse Fourier transform of the indicator function over the wavenumber domain $K$:
\begin{equation}
	R(\mathbf{r}) = \frac{1}{(2\pi)^d}\int_{K}e^{j \mathbf{k}\cdot\mathbf{r}}\,d\mathbf{k}.
	\label{eq:kappa_r}
\end{equation}
Given (\ref{eq:Operator_T_alpha}) - (\ref{eq:kappa_r}), the action of the self-adjoint operator $\mathcal{T}_{\alpha}$ on a function $\psi$ can be written as the following integral equation for $\mathbf{r} \in \alpha\Omega$:
\begin{equation}
	(\mathcal{T}_{\alpha}\psi)(\mathbf{r}) = \int_{\alpha\Omega}R(\mathbf{r}-\mathbf{r}')\psi(\mathbf{r}')\,d\mathbf{r}'.
\end{equation}

We emphasize that the evaluation of $f(1)$ in the first term of $\mathrm{tr}(f(\mathcal{T}_\alpha))$ is a simplification of $f(\sigma(\mathbf{k}, \mathbf{r}))$ from the general Widom formulation. Here $\sigma(\mathbf{k}, \mathbf{r})$ is the discontinuous symbol dependent on $\mathbf{k}$ and $\mathbf{r}$. In our considered scenario, $\sigma(\mathbf{k}, \mathbf{r})$ acts as an indicator function, where $\sigma(\mathbf{k}, \mathbf{r}) = 1$ when $(\mathbf{k}, \mathbf{r}) \in \Omega \times K$ and $0$ otherwise. 
To evaluate the eDoF of the HMIMO channel, we need to count the number of its eigenvalues, denoted by $\mu_n$, that exceed a prescribed threshold $\epsilon \in (0, 1)$. This counting process can be realized by setting the test function $f(t)$ as the step function:
\begin{equation}
	f(t) = 1 \text{ for } t \in [\epsilon, 1], \text{ and } 0 \text{ for } t \in [0, \epsilon).
	\label{eq:step_function}
\end{equation}
Since $f(\mathcal{T}_{\alpha})$ acts on the eigenvalues of the operator $\mathcal{T}_{\alpha}$, the operator trace yields exactly the eDoF:
\begin{equation}
	\mathrm{tr}(f(\mathcal{T}_\alpha)) = \sum_{n=1}^{\infty}f(\mu_n) = N_{\text{eDoF}}(\epsilon).
	\label{eq:N_rank_epsilon}
\end{equation}

\section{}
\label{sec:1D_proof}
In this appendix, we consider the 1D case. The physical supports of the spatial and wavenumber domains are given by $[-L/2, L/2]$ and $[-\frac{2\pi}{\lambda}, \frac{2\pi}{\lambda}]$, respectively. We normalize both the spatial and wavenumber domains to $[-1, 1]$. Consequently, $\alpha = \frac{\pi L}{\lambda}$. Given the definitions of $f(t)$ and $\mathcal{U}(f)$ in (\ref{eq:step_function}) and (\ref{eq:U_functional}), respectively, we have $f(1) = 1$ and 
\begin{equation}
	\mathcal{U}(f) = \frac{1}{4\pi^2} \ln\left(\frac{1-\epsilon}{\epsilon}\right).
\end{equation}
Furthermore, since $\mathbf{n}_r = \pm 1$ and $\mathbf{n}_k = \pm 1$ in the 1D case, we have $|\mathbf{n}_r\cdot\mathbf{n}_k| = 1$. Recalling that the 1D boundaries are discrete point sets, the boundary integral in (\ref{eq:Widom_Conjecture}) simplifies to a discrete double summation over the boundary points:
\begin{equation}
	\oint_{\partial\Omega}\oint_{\partial K}|\mathbf{n}_r\cdot \mathbf{n}_k|d S_k d S_r = \sum_{r\in\partial\Omega}\sum_{k\in \partial K}|\mathbf{n}_r\cdot \mathbf{n}_k| = 4.
\end{equation}
Substituting the above evaluations into (\ref{eq:Widom_Conjecture}), we have
\begin{equation}
	N_{\text{eDoF}}^{(1D)}(\epsilon) = \frac{2L}{\lambda} + \frac{1}{\pi^2}\ln\Big(\frac{1-\epsilon}{\epsilon}\Big)\ln (c) + o\left(\ln (c) \right),
\end{equation}
where $c = \frac{\pi L}{\lambda}$ is the well-known space-bandwidth product~\cite{landau1980eigenvalue}.

\section{}
\label{sec:2D_proof}
In this appendix, we consider the 2D rectangular aperture case. 
Since $f(1) = 1$ is established by the discontinuous step function in (\ref{eq:step_function}), we can evaluate the leading-order double volume integral in (\ref{eq:Widom_Conjecture}) directly over the unscaled physical spatial and wavenumber domains, denoted by $\Omega_{\text{phys}}$ and $K_{\text{phys}}$, respectively. Specifically, we have
\begin{equation}
	\mathcal{I}_{\text{vol}} = \frac{1}{4\pi^2} \int_{\Omega_{\text{phys}}} \int_{K_{\text{phys}}} \,d\mathbf{k} \,d\mathbf{r} = \frac{\text{Area}(\Omega_{\text{phys}})\text{Area}(K_{\text{phys}})}{4\pi^2}.
	\label{eq:volume_term_SW}
\end{equation}
Substituting the physical area of the rectangular array $\text{Area}(\Omega_{\text{phys}}) = L_x L_y = A_{\text{array}}$ and the area of the isotropic wavenumber disk $\text{Area}(K_{\text{phys}}) = \pi \kappa^2 = \pi \left(\frac{2\pi}{\lambda}\right)^2 = \frac{4\pi^3}{\lambda^2}$ yields:
\begin{equation}
	\mathcal{I}_{\text{vol}} = \frac{A_{\text{array}}\Big( \frac{4\pi^3}{\lambda^2} \Big)}{4\pi^2} = \frac{\pi A_{\text{array}}}{\lambda^2}.
\end{equation}

Anisotropic Boundary Decoupling: For a planar rectangular array, the space-bandwidth products ($c_x = \pi\frac{L_x}{\lambda}$ and $c_y = \pi\frac{L_y}{\lambda}$) differ along orthogonal axes. Consequently, the boundary correction term $\mathcal{I}_{\text{bound}}$ must be decomposed into two independent contributions mapped to the vertical and horizontal edges, denoted as $\partial\Omega_y$ and $\partial\Omega_x$, respectively:
\begin{equation}
	\mathcal{I}_{\text{bound}} = \text{Edge}_y + \text{Edge}_x.
\end{equation}
For the two vertical edges $\partial\Omega_y$ (total length $2L_y$), the normal vector is $\mathbf{n}_r = (\pm 1, 0)$. Since the normal vector for the circular wavenumber domain is $\mathbf{n}_k = (\cos\theta_k, \sin\theta_k)$, the wavenumber domain boundary integral yields:
\begin{equation}
	\oint_{\partial K} |\mathbf{n}_r \cdot \mathbf{n}_k| \,dS_k = \int_0^{2\pi} |\cos\theta_k| \kappa \,d\theta_k = 4\kappa.
	\label{eq:vertical_oint}
\end{equation}
Because the spatial truncation associated with these vertical edges occurs along the $x$-direction, the corresponding logarithmic scaling factor must adopt the space-bandwidth product of the $x$-axis, i.e., $\ln(c_x) = \ln\left(\pi \frac{L_x}{\lambda}\right)$. Substituting the closed-form expression of $\mathcal{U}(f)$ and integrating over the vertical boundaries $\partial\Omega_y$, the logarithmic correction $\text{Edge}_y$ is given by:
\begin{align}
	\text{Edge}_y = & \ln(c_x)\frac{\mathcal{U}(f)}{2\pi}\int_{\partial\Omega_y}\oint _{\partial K}|\mathbf{n}_r \cdot \mathbf{n}_k| \,dS_k\,d l_y \nonumber \\
	= & \frac{2 L_y}{\pi^2 \lambda} \ln\Big(\pi \frac{L_x}{\lambda}\Big) \ln\Big(\frac{1-\epsilon}{\epsilon}\Big).
	\label{eq:Edge_y}
\end{align}
By symmetry, for the two horizontal edges $\partial\Omega_x$ (total length $2L_x$ with $\mathbf{n}_r = (0, \pm 1)$), the spatial truncation occurs along the $y$-direction, invoking the logarithmic scaling factor $\ln(c_y) = \ln\big(\pi \frac{L_y}{\lambda}\big)$. The analogous evaluation yields the boundary contribution $\text{Edge}_x$:
\begin{equation}
	\text{Edge}_x = \frac{2 L_x}{\pi^2 \lambda} \ln\Big(\pi \frac{L_y}{\lambda}\Big) \ln\Big(\frac{1-\epsilon}{\epsilon}\Big).
\end{equation}
Specifically, the higher-order remainder term in (\ref{eq:Widom_Conjecture}) can be written as:
\begin{equation}
	o\Big(\frac{L_x}{\lambda} \ln\Big(\frac{\pi L_y}{\lambda}\Big) + \frac{L_y}{\lambda} \ln\Big(\frac{\pi L_x}{\lambda}\Big)\Big).
\end{equation}

\bibliographystyle{IEEEtran}
\bibliography{mybibliography.bib}

\begin{thebibliography}{10}
\providecommand{\url}[1]{#1}
\csname url@samestyle\endcsname
\providecommand{\newblock}{\relax}
\providecommand{\bibinfo}[2]{#2}
\providecommand{\BIBentrySTDinterwordspacing}{\spaceskip=0pt\relax}
\providecommand{\BIBentryALTinterwordstretchfactor}{4}
\providecommand{\BIBentryALTinterwordspacing}{\spaceskip=\fontdimen2\font plus
\BIBentryALTinterwordstretchfactor\fontdimen3\font minus
  \fontdimen4\font\relax}
\providecommand{\BIBforeignlanguage}[2]{{%
\expandafter\ifx\csname l@#1\endcsname\relax
\typeout{** WARNING: IEEEtran.bst: No hyphenation pattern has been}%
\typeout{** loaded for the language `#1'. Using the pattern for}%
\typeout{** the default language instead.}%
\else
\language=\csname l@#1\endcsname
\fi
#2}}
\providecommand{\BIBdecl}{\relax}
\BIBdecl

\bibitem{Wei2026Electromagnetic}
L.~Wei \emph{et~al.}, ``Electromagnetic information theory for holographic
  {MIMO} communications,'' \emph{IEEE Communications Surveys \& Tutorials},
  vol.~28, pp. 6211--6240, 2026.

\bibitem{Zhang2026Wavenumber}
Z.~Zhang \emph{et~al.}, ``Wavenumber-domain signal processing for holographic
  {MIMO}: Foundations, methods, and future directions,'' \emph{IEEE
  Communications Standards Magazine}, vol.~10, no.~2, pp. 127--133, 2026.

\bibitem{Bjornson2024Towards}
E.~Bjornson \emph{et~al.}, ``Towards {6G MIMO}: Massive spatial multiplexing,
  dense arrays, and interplay between electromagnetics and processing,''
  \emph{arXiv:2401.02844}, 2024.

\bibitem{Pizzo2020Spatially}
A.~Pizzo \emph{et~al.}, ``Spatially-stationary model for holographic {MIMO}
  small-scale fading,'' \emph{IEEE Journal on Selected Areas in
  Communications}, vol.~38, no.~9, pp. 1964--1979, 2020.

\bibitem{Yan2026Spectrally}
H.~Yan \emph{et~al.}, ``A spectrally convergent discretization of holographic
  {MIMO} channels,'' \emph{IEEE Communications Letters}, vol.~30, pp.
  1964--1968, 2026.

\bibitem{Pizzo2022Fourier}
A.~Pizzo \emph{et~al.}, ``Fourier plane-wave series expansion for holographic
  {MIMO} communications,'' \emph{IEEE Transactions on Wireless Communications},
  vol.~21, no.~9, pp. 6890--6905, 2022.

\bibitem{landau1980eigenvalue}
H.~J. Landau \emph{et~al.}, ``Eigenvalue distribution of time and frequency
  limiting,'' \emph{Journal of Mathematical Analysis and Applications},
  vol.~77, no.~2, pp. 469--481, 1980.

\bibitem{Widom1982}
H.~Widom, \emph{On a Class of Integral Operators with Discontinuous
  Symbol}.\hskip 1em plus 0.5em minus 0.4em\relax Basel: Birkh{\"a}user Basel,
  1982, pp. 477--500.

\bibitem{PhysRevLett.96.100503}
D.~Gioev \emph{et~al.}, ``Entanglement entropy of fermions in any dimension and
  the widom conjecture,'' \emph{Phys. Rev. Lett.}, vol.~96, p. 100503, Mar
  2006.

\bibitem{sobolev2013pseudo}
A.~V. Sobolev, \emph{Pseudo-differential operators with discontinuous symbols:
  Widom’s conjecture}.\hskip 1em plus 0.5em minus 0.4em\relax Memoirs of the
  American Mathematical Society, 2013, vol. 222, no. 1043.

\bibitem{Sobolev2015Wiener}
------, ``Wiener–hopf operators in higher dimensions: The widom conjecture
  for piece-wise smooth domains,'' \emph{Integral Equations and Operator
  Theory}, vol.~81, no.~3, pp. 435--449, Mar 2015.

\bibitem{Sobolev2017Functions}
------, ``Functions of self-adjoint operators in ideals of compact operators,''
  \emph{Journal of the London Mathematical Society}, vol.~95, no.~1, pp.
  157--176, 2017.

\bibitem{hildebrand1962advanced}
F.~Hildebrand, \emph{{Advanced Calculus for Applications}}.\hskip 1em plus
  0.5em minus 0.4em\relax Upper Saddle River, NJ, USA, Prentice-Hall, 1962.

\bibitem{riesz1990functional}
F.~Riesz \emph{et~al.}, \emph{Functional Analysis}.\hskip 1em plus 0.5em minus
  0.4em\relax New York: Courier Corporation, 1990.

\bibitem{ghojogh2021reproducing}
B.~Ghojogh \emph{et~al.}, ``Reproducing kernel {Hilbert} space, {Mercer's}
  theorem, eigenfunctions, {Nystrom} method, and use of kernels in machine
  learning: Tutorial and survery,'' \emph{arXiv preprint arXiv:2106.08443},
  2021.

\bibitem{Slepian1961PSWF}
D.~Slepian \emph{et~al.}, ``{Prolate spheroidal wave functions, fourier
  analysis and uncertainty — I},'' \emph{The Bell System Technical Journal},
  vol.~40, no.~1, pp. 43--63, 1961.

\bibitem{nikolskii1975approximation}
S.~M. Nikol'skii, \emph{Approximation of Functions of Several Variables and
  Imbedding Theorems}.\hskip 1em plus 0.5em minus 0.4em\relax New York:
  Springer-Verlag, 1975.

\bibitem{Rahman2009Bernstein}
Q.~Rahman \emph{et~al.}, ``{On Bernstein's inequality for entire functions of
  exponential type},'' \emph{Journal of Mathematical Analysis and
  Applications}, vol. 359, no.~1, pp. 168--180, 2009.

\bibitem{Poon2005Degrees}
A.~Poon \emph{et~al.}, ``Degrees of freedom in multiple-antenna channels: a
  signal space approach,'' \emph{IEEE Transactions on Information Theory},
  vol.~51, no.~2, pp. 523--536, 2005.

\bibitem{lanczos1950iteration}
C.~Lanczos, ``{An iteration method for the solution of the eigenvalue problem
  of linear differential and integral operators},'' \emph{Journal of research
  of the National Bureau of Standards}, vol.~45, no.~4, pp. 255--282, 1950.

\bibitem{Pizzo2022Spatial}
A.~Pizzo \emph{et~al.}, ``Spatial characterization of electromagnetic random
  channels,'' \emph{IEEE Open Journal of the Communications Society}, vol.~3,
  pp. 847--866, 2022.

\bibitem{Wang2022Electromagnetic}
T.~Wang \emph{et~al.}, ``Electromagnetic-compliant channel modeling and
  performance evaluation for holographic {MIMO},'' in \emph{2022 IEEE Globecom
  Workshops (GC Wkshps)}, 2022, pp. 747--752.

\bibitem{Sun25CM}
S.~Sun \emph{et~al.}, ``How to differentiate between near field and far field:
  Revisiting the {Rayleigh} distance,'' \emph{IEEE Communications Magazine},
  vol.~63, no.~1, pp. 22--28, Jan. 2025.

\bibitem{davis2007methods}
P.~J. Davis \emph{et~al.}, \emph{Methods of numerical integration}.\hskip 1em
  plus 0.5em minus 0.4em\relax Courier Corporation, 2007.

\bibitem{hildebrand1987introduction}
F.~B. Hildebrand, \emph{Introduction to Numerical Analysis}, 2nd~ed.\hskip 1em
  plus 0.5em minus 0.4em\relax New York: Dover Publications, 1987.

\bibitem{robbins1955remark}
H.~Robbins, ``{A remark on Stirling's formula},'' \emph{The American
  mathematical monthly}, vol.~62, no.~1, pp. 26--29, 1955.

\bibitem{rudin1974real}
W.~Rudin, \emph{Real and Complex Analysis}.\hskip 1em plus 0.5em minus
  0.4em\relax McGraw-Hill, 1974.

\end{thebibliography}

\end{document}